\newcommand{\Equal}{\hspace{-0.7mm}=\hspace{-0.7mm}}
\newcommand{\Add}{\hspace{-0.7mm}+\hspace{-0.7mm}}
\newcommand{\Minus}{\hspace{-0.7mm}-\hspace{-0.7mm}}
\newtheorem{theorem}{Theorem}
\newenvironment{proof}[1][Proof]{\begin{trivlist}
\item[\hskip \labelsep {\bfseries #1}]}{\end{trivlist}}
\newenvironment{remark1}[1][Remark 1:]{\begin{trivlist}
\item[\hskip \labelsep {\bfseries #1}]}{\end{trivlist}}
\newenvironment{remark2}[1][Remark 2:]{\begin{trivlist}
\item[\hskip \labelsep {\bfseries #1}]}{\end{trivlist}}
\newcommand{\qed}{\nobreak \ifvmode \relax \else
      \ifdim\lastskip<1.5em \hskip-\lastskip
      \hskip1.5em plus0em minus0.5em \fi \nobreak
      \vrule height0.75em width0.5em depth0.25em\fi}
\begin{document}

\title{Adaptive Mode Selection and Power Allocation in Bidirectional Buffer-aided Relay Networks}

\author{Vahid Jamali$^\dag$, Nikola Zlatanov$^\ddag$, Aissa Ikhlef$^\ddag$, and Robert Schober$^\dag$ \\
\IEEEauthorblockA{$^\dag$ Friedrich-Alexander University (FAU), Erlangen, Germany \\
 $^\ddag$ University of British Columbia (UBC), Vancouver, Canada}
}

\maketitle

\begin{abstract}
In  this  paper,  we  consider  the  problem  of  sum
rate maximization in a bidirectional relay network with fading.
Hereby,  user  1  and  user  2  communicate  with  each  other  only
through a relay, i.e., a direct link between user 1 and user 2 is
not present. In this network, there exist six possible transmission
modes: four point-to-point modes (user 1-to-relay, user 2-to-relay,
relay-to-user  1,  relay-to-user  2),  a  multiple  access  mode  (both
users  to  the  relay),  and  a  broadcast  mode  (the  relay  to  both
users). Most existing protocols assume a fixed schedule of using a
subset of the aforementioned transmission modes, as a result, the
sum rate is limited by the capacity of the weakest link associated
with the relay in each time slot. Motivated by this limitation, we
develop a protocol which is not restricted to adhere to a predefined schedule for
using the transmission modes. Therefore, all transmission modes
of the bidirectional relay network can be used adaptively based
on  the  instantaneous  channel  state  information  (CSI)  of  the
involved links. To this end, the relay has to be equipped with two buffers for the storage of the information received from users 1 and 2, respectively. For the considered network, given a total average
power budget for all nodes, we jointly optimize the transmission
mode selection and power allocation based on the instantaneous
CSI  in  each  time  slot  for  sum  rate  maximization.  Simulation
results  show  that  the  proposed  protocol  outperforms  existing
protocols  for  all  signal-to-noise  ratios  (SNRs).  Specifically,  we
obtain a considerable gain at low SNRs due to the adaptive power
allocation and at high SNRs due to the adaptive mode selection.\end{abstract}
%

\section{Introduction} \label{Sec I (Intro)}
In a bidirectional relay network, two users exchange information via a relay node \cite{Tarokh}. Several protocols have been proposed
for such a network under the practical half-duplex constraint,
i.e., a node cannot transmit and receive at the same time and
in the same frequency band. The simplest protocol is the traditional two-way relaying protocol in which the transmission
is accomplished in four successive point-to-point phases: user
1-to-relay,  relay-to-user  2,  user  2-to-relay,  and  relay-to-user
1.  In  contrast,  the  time  division  broadcast  (TDBC)  protocol
exploits the broadcast capability of the wireless medium and
combines the relay-to-user 1 and relay-to-user 2 phases into
one phase, the broadcast phase \cite{TDBC}. Thereby, the relay broadcasts a superimposed codeword, carrying information for both
user 1 and user 2, such that each user is able to recover its
intended information by self-interference cancellation. Another
existing  protocol  is  the  multiple  access  broadcast  (MABC)
protocol in which the user 1-to-relay and user 2-to-relay phases
are also combined into one phase, the multiple-access phase
\cite{MABC}.  In  the  multiple-access  phase,  both  user  1  and  user  2
simultaneously transmit to the relay which is able to decode
both messages. Generally, for the bidirectional relay network
without a direct link between user 1 and user 2, six transmission  modes are possible:  four  point-to-point  modes
(user 1-to-relay, user 2-to-relay, relay-to-user 1, relay-to-user
2),  a  multiple  access  mode  (both  users  to  the  relay),  and  a broadcast mode (the relay to both users), where the capacity
region of each transmission mode is known \cite{BocheIT}, \cite{Cover}. Using this knowledge,  a significant research effort has been dedicated to obtaining the achievable rate region  of the bidirectional relay network
 \nocite{Tarokh,BocheIT,BochePIMRC,PopovskiICC,PopovskiLetter} \cite{Tarokh}-\cite{ PopovskiLetter}.  Specifically, the achievable rates of  most existing protocols for two-hop relay transmission are
limited by the instantaneous capacity of the weakest link associated with the relay. The reason for this is the fixed schedule of using the transmission modes which is adopted in all existing protocols, and does not exploit the instantaneous channel state information (CSI) of the involved links. For one-way relaying, an adaptive link selection protocol was proposed
in \cite{NikolaJSAC} where based on the instantaneous CSI, in each time slot,
either the source-relay or relay-destination links are selected
for transmission. To this end, the relay has to have a buffer for
data storage. This strategy was shown to achieve the capacity
of the one-way relay channel with fading \cite{NikolaTIT}.

Moreover, in fading AWGN channels, power control
is  necessary  for  rate  maximization.  The  highest  degree  of
freedom  that  is  offered  by  power  control  is  obtained  for  a
joint average power constraint for all nodes. Any other power constraint
with the same total power budget is more restrictive than the joint
power constraint and results in a lower sum rate. Therefore,
motivated by the protocols in \cite{NikolaJSAC} and \cite{NikolaTIT}, our goal is to utilize all available degrees of freedom of the three-node half-duplex
bidirectional relay network with fading, via an adaptive mode
selection  and  power  allocation  policy.  In  particular,  given  a
joint power budget for all nodes, we find a policy which in
each time slot selects the optimal transmission mode from the
six  possible  modes  and  allocates  the  optimal  powers  to  the
nodes  transmitting  in  the  selected  mode,  such  that  the  sum
rate is maximized.

Adaptive mode selection for bidirectional relaying was also considered in \cite{PopovskiLetter} and \cite{EUSIPCO}. However, the selection policy in \cite{PopovskiLetter} does not use all possible modes, i.e., it only selects from two point-to-point modes and the
broadcast mode, and assumes that the transmit powers of all
three  nodes  are  fixed  and  identical.  Although  the  selection
policy  in  \cite{EUSIPCO}  considers  all   possible  transmission  modes  for
adaptive mode selection, the transmit powers of the nodes are assumed to
be fixed, i.e., power allocation is not possible. Interestingly, mode selection and power allocation are mutually coupled and the modes selected with the protocol in \cite{EUSIPCO} for a given channel are different from the modes selected with the proposed protocol. Power allocation can considerably improve the sum rate by optimally allocating the powers to the nodes based on the instantaneous CSI especially when the total power budget in the network is low.  Moreover, the  proposed  protocol
achieves the maximum sum rate in the considered bidirectional
network. Hence, the sum rate achieved with
the proposed protocol can be used as a reference for other low complexity suboptimal protocols. Simulation
results confirm that the proposed protocol outperforms existing
protocols.

Finally, we note that the advantages of buffering come at
the expense of an increased end-to-end delay. However, with
some modifications to the optimal protocol, the average delay
can  be  bounded,  as  shown  in  \cite{NikolaJSAC},  which  causes only  a  small loss in the achieved rate. The delay analysis of the proposed
protocol is beyond the scope of the current work and is left
for future research.

\section{System Model}\label{SysMod}
\begin{figure}
\centering
\psfrag{U1}[c][c][0.75]{$\text{User 1}$}
\psfrag{U2}[c][c][0.75]{$\text{User 2}$}
\psfrag{R}[c][c][0.75]{$\text{Relay}$}
\psfrag{h1}[c][c][0.75]{$h_1(i)$}
\psfrag{h2}[c][c][0.75]{$h_2(i)$}
\psfrag{P1}[c][c][0.75]{$P_1(i)$}
\psfrag{P2}[c][c][0.75]{$P_2(i)$}
\psfrag{Pr}[c][c][0.75]{$P_r(i)$}
\psfrag{B1}[c][c][0.75]{$B_1$}
\psfrag{B2}[c][c][0.75]{$B_2$}
\includegraphics[width=3.5 in]{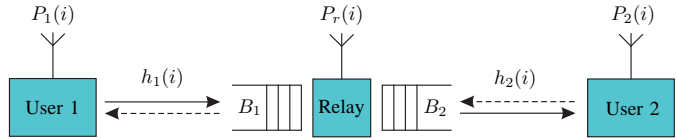}
\caption{Three-node bidirectional relay network consisting of two users and a relay.}
\label{FigSysMod}
\end{figure}
\begin{figure}
\centering
\psfrag{U1}[c][c][0.5]{$\text{User 1}$}
\psfrag{U2}[c][c][0.5]{$\text{User 2}$}
\psfrag{R}[c][c][0.5]{$\text{Relay}$}
\psfrag{M1}[c][c][0.75]{$\mathcal{M}_1$}
\psfrag{M2}[c][c][0.75]{$\mathcal{M}_2$}
\psfrag{M3}[c][c][0.75]{$\mathcal{M}_3$}
\psfrag{M4}[c][c][0.75]{$\mathcal{M}_4$}
\psfrag{M5}[c][c][0.75]{$\mathcal{M}_5$}
\psfrag{M6}[c][c][0.75]{$\mathcal{M}_6$}
\includegraphics[width=3.4 in]{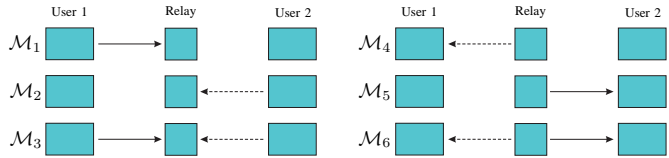}
\caption{The six possible transmission modes in the considered bidirectional relay network.}
\label{FigModes}
\end{figure}
In this section, we first describe the channel model. Then,
we provide the achievable rates for the six possible transmission modes.

\subsection{Channel Model}
We consider a simple network in which user 1 and user 2
exchange information with the help of a relay node as shown
in Fig. 1. We assume that there is no direct link between user
1 and user 2, and thus, user 1 and user 2 communicate with
each other only through the relay node. We assume that all
three nodes in the network are half-duplex. Furthermore, we
assume that time is divided into slots of equal length and that
each node transmits codewords which span one time slot or a
fraction of a time slot as will be explained later. We assume
that the user-to-relay and relay-to-user channels are impaired
by AWGN with unit variance and block fading, i.e., the channel
coefficients are constant during one time slot and change from
one time slot to the next. Moreover, in each time slot, the
channel coefficients are assumed to be reciprocal such that the
user 1-to-relay and the user 2-to-relay channels are identical to
the relay-to-user 1 and relay-to-user 2 channels, respectively. Let $h_1(i)$ and $h_2(i)$ denote the channel coefficients between user 1 and the relay and between user 2 and the relay in the $i$-th time slot, respectively. Furthermore, let $S_1(i)=|h_1(i)|^2$ and $S_2(i)=|h_2(i)|^2$  denote the squares  of the channel coefficient amplitudes in the $i$-th time slot. $S_1(i)$ and $S_2(i)$   are assumed to be ergodic and stationary random processes with means $\Omega_1=E\{S_1\}$ and  $\Omega_2=E\{S_2\}$\footnote{In this paper, we drop time index $i$ in expectations for notational simplicity.}, respectively, where $E\{\cdot\}$ denotes expectation.  Since the noise is AWGN, in order to achieve the capacity of each mode, nodes have to transmit
Gaussian distributed codewords. Therefore,  the transmitted codewords of user 1, user 2, and the relay are comprised of symbols which are Gaussian distributed random variables with variances $P_1(i), P_2(i)$, and $P_r(i)$, respectively, where $P_j(i)$ is the transmit power of node $j\in\{1,2,r\}$ in the $i$-th time slot. For ease of notation, we define $C(x)\triangleq \log_2(1+x)$.  In the following, we describe the transmission modes and their achievable rates.

\subsection{Transmission Modes and Their Achievable  Rates}

In the considered bidirectional relay network only six transmission modes are possible, cf. Fig. \ref{FigModes}. The six possible transmission modes are denoted by  ${\cal M}_1,...,{\cal M}_6$, and  $R_{jj'}(i)\geq 0, \,\,j,j'\in\{1,2,r\}$, denotes the transmission rate from node $j$ to node $j'$ in the $i$-th time slot. Let $B_1$ and $B_2$ denote two infinite-size buffers at the relay in which the received information from user 1 and user 2 is stored, respectively. Moreover, $Q_j(i), \,\, j\in\{1,2\}$, denotes the amount of normalized information in bits/symbol available in buffer $B_j$ in the $i$-th time slot. Using this notation, the transmission modes and their respective rates are presented in the following:

${\cal M}_1$: User 1 transmits to the relay and user 2 is silent. In this mode,     the maximum rate from user 1 to the relay in  the $i$-th time slot is given by $R_{1r}(i)= C_{1r}(i)$, where $C_{1r}(i)=C(P_1(i)S_1(i))$. The relay decodes this information and stores it in buffer $B_1$. Therefore, the amount of information in buffer $B_1$ increases to $Q_1(i)=Q_1(i-1)+R_{1r}(i)$.

${\cal M}_2$: User 2 transmits to the relay and user 1 is silent. In this mode,    the maximum rate from user 2 to the relay in  the $i$-th time slot is given by $R_{2r}(i)= C_{2r}(i)$, where $C_{2r}(i)=C(P_2(i)S_2(i))$.  The relay decodes this information and stores it in buffer $B_2$. Therefore, the amount of information in buffer $B_2$ increases to $Q_2(i)=Q_2(i-1)+R_{2r}(i)$.

${\cal M}_3$: Both users 1 and 2 transmit to the relay  simultaneously. For this mode, we assume that multiple access transmission is used, see \cite{Cover}. Thereby, the maximum achievable sum rate in the $i$-th time slot is given by $R_{1r}(i)+R_{2r}(i)= C_{r}(i)$, where $C_{r}(i)=C(P_1(i)S_1(i)+P_2(i)S_2(i))$.
 Since user 1 and user 2 transmit independent messages, the sum rate, $C_r(i)$, can be decomposed into two rates, one from user 1 to the relay and the other one from user 2 to the relay. Moreover, these two capacity rates can be achieved via time sharing and successive interference cancelation. Thereby, in the first $0\leq t(i)\leq 1$ fraction of the $i$-th time slot, the relay first decodes the
codeword received from user 2 and considers the signal from user 1 as noise. Then, the relay subtracts the signal received from user 2 from the received signal and decodes the
 codeword received from user 1. A similar procedure is performed in the remaining $1-t(i)$ fraction of the $i$-th time slot but now the relay first
decodes the codeword received from user 1 and treats the signal of user 2 as noise, and then decodes the codeword received from user 2. Therefore, for a given $t(i)$, we decompose $C_r(i)$ as $C_r(i)=C_{12r}(i)+C_{21r}(i)$ and  the maximum rates from users 1 and 2 to the relay in the $i$-th time slot are $R_{1r}(i) = C_{12r}(i)$ and $R_{2r}(i)=C_{21r}(i)$, respectively. $C_{12r}(i)$ and $C_{21r}(i)$ are given by
\begin{subequations}
\begin{align}\label{Cr}
  C_{12r}(i)\hspace{-0.5mm}&=\hspace{-0.5mm}t(i) C\left(P_1(i)S_{1}(i)\right)\hspace{-1mm} + \hspace{-1mm}(1-t(i)) C\left(\frac{P_1(i)S_{1}(i)}{1+P_2(i)S_{2}(i)}\right)  \IEEEeqnarraynumspace\IEEEyesnumber\IEEEyessubnumber \\
   C_{21r}(i)\hspace{-0.5mm}&=\hspace{-0.5mm}(1-t(i)) C\left(P_2(i)S_{2}(i)\right) \hspace{-1mm} +  \hspace{-1mm} t(i) C\left(\frac{P_2(i)S_{2}(i)}{1+P_1(i)S_{1}(i)}\right) \IEEEyessubnumber
\end{align}
\end{subequations}
The relay decodes the information received from user 1  and user 2  and stores it in its buffers $B_1$ and $B_2$, respectively. Therefore, the amounts of information in buffers $B_1$ and $B_2$ increase to $Q_1(i)=Q_1(i-1)+R_{1r}(i)$ and $Q_2(i)=Q_2(i-1)+R_{2r}(i)$, respectively.

${\cal M}_4$: The relay transmits the information received  from user 2 to user 1. Specifically, the relay extracts the information from buffer $B_2$, encodes it into a codeword, and transmits it to user 1. Therefore, the transmission rate from the relay to user 1 in the $i$-th time slot is limited by both the capacity of the relay-to-user 1 channel and the amount of  information  stored in buffer $B_2$. Thus, the maximum transmission rate from the relay to user 1 is given by $R_{r1}(i) \Equal \min\{C_{r1}(i),Q_2(i-1)\}$, where $C_{r1}(i)=C(P_r(i)S_1(i))$. Therefore, the amount of information in buffer $B_2$ decreases to $Q_2(i)\Equal Q_2(i\Minus 1)\Minus R_{r1}(i)$.

${\cal M}_5$: This mode is identical to ${\cal M}_4$ with user 1 and 2 switching places. The maximum transmission rate from the relay to user 2 is given by $R_{r2}(i) = \min\{C_{r2}(i),Q_1(i-1)\}$, where $C_{r2}(i)=C(P_r(i)S_2(i))$ and the amount of information in buffer $B_1$ decreases to $Q_1(i)\Equal Q_1(i\Minus 1)\Minus R_{r2}(i)$.

${\cal M}_6$: The relay broadcasts to both user 1 and user 2 the information received from user 2 and user 1, respectively. Specifically, the relay extracts the information intended for user 2 from buffer $B_1$ and  the information intended for user 1 from buffer $B_2$. Then, based on the scheme in \cite{BocheIT}, it constructs a superimposed codeword which contains the information from both users and broadcasts it to both users. Thus, in the $i$-th time slot, the maximum rates from the relay to users 1 and 2 are given by  $R_{r1}(i) \hspace{-1mm}=\hspace{-1mm} \min\hspace{-0.2mm}\{C_{r1}(i),\hspace{-0.5mm}Q_2(i-1)\}$ and $R_{r2}(i)=\min\{C_{r2}(i),Q_1(i-1)\}$, respectively. Therefore, the amounts of information in buffers $B_1$ and $B_2$ decrease to $Q_1(i)\Equal Q_1(i\Minus 1)\Minus R_{r2}(i)$ and $Q_2(i)\Equal Q_2(i\Minus 1)\Minus R_{r1}(i)$, respectively.

%
Our aim is to develop an  optimal mode selection and power allocation policy which in each time slot selects one of the six   transmission modes, ${\cal M}_1,...,{\cal M}_6$, and allocates the optimal powers to the transmitting nodes of the selected mode  such that the average sum rate of both users is maximized.
To this end,  we introduce six binary variables,  $q_k(i) \in\{0,1\}, \,\,k=1,...,6$, where  $q_k(i)$ indicates whether or not transmission mode  $\mathcal{M}_k$ is selected in the $i$-th time slot. In particular, $q_k(i)=1$ if  mode $\mathcal{M}_k$ is selected and $q_k(i)=0$ if it is not selected in the $i$-th time slot. Furthermore, since in each time slot  only one of the six transmission modes can be selected, only one of the mode selection variables is equal to one and the others are zero, i.e., $\mathop \sum_{k = 1}^6 q_k(i)=1$ holds.

 In the proposed framework, we assume that all nodes have full knowledge of the CSI of both links. Thus, based on the CSI and the proposed protocol, cf. Theorem~\ref{AdaptProt}, each node is able to individually decide which transmission mode is selected and adapt its transmission strategy accordingly.

\section{Joint Mode Selection and Power Allocation}\label{AdapMod}

In this section, we first investigate the achievable average sum rate  of the network. Then, we formulate a maximization problem whose solution is the sum rate maximizing protocol.

\subsection{Achievable Average Sum Rate}

We assume that user 1 and user 2 always have enough information to send in all time slots and that the number of time slots, $N$, satisfies $N\to \infty$. Therefore, using $q_k(i)$, the user 1-to-relay, user 2-to-relay, relay-to-user 1, and relay-to-user 2 average transmission rates, denoted by $\bar{R}_{1r}$, $\bar{R}_{2r}$, $\bar{R}_{r1}$, and $\bar{R}_{r2}$, respectively, are obtained as
\begin{IEEEeqnarray}{lll}\label{RatReg123}
    &&\bar{R}_{1r} = \underset{N\to \infty}{\lim} \frac{1}{N}\mathop \sum \limits_{i = 1}^N \left[ q_1(i)C_{1r}(i)+q_3(i)C_{12r}(i)\right] \IEEEyesnumber\IEEEyessubnumber \\
		&&\bar{R}_{2r} =  \underset{N\to \infty}{\lim} \frac{1}{N}\mathop \sum \limits_{i = 1}^N \left[ q_2(i)C_{2r}(i)+q_3(i)C_{21r}(i)\right] \IEEEyessubnumber\\
    &&\bar{R}_{r1}\hspace{-1mm}  =\hspace{-1mm} \underset{N\to \infty}{\lim}\frac{1}{N}\hspace{-1mm} \mathop \sum \limits_{i = 1}^N \left[ q_4(i)\hspace{-0.5mm} +\hspace{-0.5mm} q_6(i)\right]\hspace{-0.5mm} \min\{C_{r1}(i),Q_2(i-1)\} \IEEEeqnarraynumspace\IEEEyessubnumber \\
		&&\bar{R}_{r2}\hspace{-1mm}  =\hspace{-1mm}   \underset{N\to \infty}{\lim}\frac{1}{N}\hspace{-1mm} \mathop \sum \limits_{i = 1}^N \left[ q_5(i)\hspace{-0.5mm} +\hspace{-0.5mm} q_6(i)\right]\hspace{-0.5mm} \min\{C_{r2}(i),Q_1(i-1)\}. \IEEEeqnarraynumspace \IEEEyessubnumber
\end{IEEEeqnarray}
The average rate from user 1 to user 2 is the average rate that user 2 receives from the relay, i.e.,  $\bar{R}_{r2}$. Similarly,   the average rate from user 2 to user 1 is the average rate that user 1 receives from the relay, i.e.,  $\bar{R}_{r1}$.
 In the following theorem, we introduce a useful condition for the queues in the buffers of the relay  leading to the optimal mode selection and power allocation policy.

\begin{theorem}[Optimal Queue Condition]\label{Queue}
\normalfont
The maximum average sum rate, $\bar{R}_{r1}+\bar{R}_{r2}$, for the considered bidirectional relay network is obtained when  the queues in the buffers $B_1$ and $B_2$   at the relay are at the edge of non-absorbtion. More precisely, the following conditions must hold for the maximum sum rate
\begin{IEEEeqnarray}{lll}\label{RatRegApp456-buffer}
\bar{R}_{1r}=\bar{R}_{r2} = \underset{N\to \infty}{\lim}\frac{1}{N}\mathop \sum \limits_{i = 1}^N \left[ q_5(i)+q_6(i)\right]C_{r2}(i)  \IEEEyesnumber\IEEEeqnarraynumspace \IEEEyessubnumber \\
\bar{R}_{2r}=\bar{R}_{r1}= \underset{N\to \infty}{\lim}\frac{1}{N}\mathop \sum \limits_{i = 1}^N \left[ q_4(i)+q_6(i)\right]C_{r1}(i) \IEEEeqnarraynumspace\IEEEyessubnumber
\end{IEEEeqnarray}
where $\bar{R}_{1r}$ and $\bar{R}_{2r}$ are given by (\ref{RatReg123}a) and (\ref{RatReg123}b), respectively.
\end{theorem}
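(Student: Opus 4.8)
The plan is to combine a flow-conservation (queue-stability) argument with an exchange argument, and then to invoke the edge-of-non-absorption property to remove the $\min$ operators. First I would set up the information balance at each buffer. Buffer $B_1$ is fed at the instantaneous rate $q_1(i)C_{1r}(i)+q_3(i)C_{12r}(i)$ and drained at the rate $[q_5(i)+q_6(i)]\min\{C_{r2}(i),Q_1(i-1)\}$, so that $Q_1(i)=Q_1(i-1)+(\text{arrivals})-(\text{departures})$. Since $Q_1(i)\ge 0$ for every $i$, summing this recursion over $i=1,\dots,N$, dividing by $N$, and letting $N\to\infty$ gives $\bar{R}_{1r}\ge\bar{R}_{r2}$; the analogous computation for $B_2$ yields $\bar{R}_{2r}\ge\bar{R}_{r1}$. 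The presence of the $\min$ makes these inequalities automatic, as the relay can never forward more than it has stored.

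Next I would show that optimality forces equality. The objective $\bar{R}_{r1}+\bar{R}_{r2}$ counts only relay-to-user (downlink) rates, whereas the uplink transmissions that determine $\bar{R}_{1r}$ and $\bar{R}_{2r}$ also draw on the common power budget and on the pool of time slots. Suppose, towards a contradiction, that an optimal policy had $\bar{R}_{1r}>\bar{R}_{r2}$. Then $B_1$ is absorbing and accumulates an unbounded surplus of undelivered data. Reassign a small fraction $\epsilon$ of the slots spent on user~1's uplink (mode $\mathcal{M}_1$, or a reduction of $P_1(i)$ in $\mathcal{M}_3$) to the relay-to-user-2 modes $\mathcal{M}_5$ and $\mathcal{M}_6$, moving the corresponding share of the power budget. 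This lowers $\bar{R}_{1r}$ but raises $\bar{R}_{r2}$ by a strictly positive amount, because the surplus in $B_1$ guarantees that the extra downlink transmissions are never buffer-limited and the user~2 channel has positive gain on a set of slots of positive measure. Hence $\bar{R}_{r1}+\bar{R}_{r2}$ strictly increases while $Q_1(i)\ge 0$ and $\bar{R}_{1r}\ge\bar{R}_{r2}$ are preserved, contradicting optimality. Therefore $\bar{R}_{1r}=\bar{R}_{r2}$, and the symmetric argument applied to $B_2$ gives $\bar{R}_{2r}=\bar{R}_{r1}$.

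Finally I would remove the $\min$ from the departure rates. At the equality point the queues operate at the edge of non-absorption, and the buffer analysis of \cite{NikolaJSAC,NikolaTIT} shows that such a queue can be operated so as to hold, almost surely, enough data to fill each scheduled downlink transmission, whence
\[
\lim_{N\to\infty}\frac{1}{N}\sum_{i=1}^N[q_5(i)+q_6(i)]\min\{C_{r2}(i),Q_1(i-1)\}=\lim_{N\to\infty}\frac{1}{N}\sum_{i=1}^N[q_5(i)+q_6(i)]C_{r2}(i),
\]
and likewise for $B_2$. Substituting these identities into the two equalities just established yields (\ref{RatRegApp456-buffer}a) and (\ref{RatRegApp456-buffer}b). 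I expect this last step to be the main obstacle: one must show rigorously that at the edge of non-absorption the buffer is asymptotically never the binding constraint, so that the $\min$ is removable, while the arrival and departure rates remain exactly equal. This is precisely the queueing result I would import from \cite{NikolaJSAC,NikolaTIT} rather than re-derive, since establishing it from scratch requires a delicate analysis of the induced queue process.
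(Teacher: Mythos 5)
Your steps 1 and 2 are sound, and they follow the same general route as the argument the paper relies on (note the paper proves nothing internally here: its ``proof'' is a pointer to \cite[Appendix A]{EUSIPCO}). The genuine gap is in your step 3, where you assert that ``at the equality point the queues operate at the edge of non-absorption'' and then invoke \cite{NikolaJSAC}, \cite{NikolaTIT} to drop the $\min$. That inference is a non sequitur. Equality of the arrival rate and the \emph{actual} (min-limited) departure rate holds for \emph{every} stable queue, not only for critically loaded ones: from the recursion $Q_1(N)=Q_1(0)+\sum_{i}\mathrm{arr}(i)-\sum_{i}\mathrm{dep}(i)$ one gets $\frac{1}{N}\sum_i \mathrm{dep}(i)\to \bar{R}_{1r}$ whenever $Q_1(N)/N\to 0$. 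In particular, a policy that grossly over-provisions the downlink (say, schedules $\mathcal{M}_5$ in half the slots with large power while using $\mathcal{M}_1$ rarely) keeps $B_1$ almost always empty, satisfies $\bar{R}_{1r}=\bar{R}_{r2}$ exactly, and yet has $\lim_{N\to\infty}\frac{1}{N}\sum_i [q_5(i)+q_6(i)]C_{r2}(i)$ strictly larger than $\bar{R}_{r2}$, so the displayed equalities of the theorem fail for it. Your argument never rules out that such a policy is optimal, so the theorem as stated is not proved.

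What is missing is a second exchange argument, symmetric to your first: if $\lim_{N\to\infty}\frac{1}{N}\sum_i[q_5(i)+q_6(i)]C_{r2}(i) > \bar{R}_{1r}$, then on a positive fraction of slots the relay-to-user-2 transmission is buffer-limited and part of the downlink time/power is wasted; reassigning an $\epsilon$-fraction of that excess to user 1's uplink strictly increases $\bar{R}_{1r}$, and (for $\epsilon$ small) the downlink remains adequate so the delivered rate $\bar{R}_{r2}$, which equals $\bar{R}_{1r}$ by stability, strictly increases as well --- contradicting optimality. Only after excluding \emph{both} the wasted-uplink case (your step 2) and this wasted-downlink case do you obtain the triple equality $\bar{R}_{1r}=\bar{R}_{r2}=\lim_{N\to\infty}\frac{1}{N}\sum_i[q_5(i)+q_6(i)]C_{r2}(i)$, i.e., the edge of non-absorption. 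The results you import from \cite{NikolaJSAC}, \cite{NikolaTIT} go in the opposite direction (critical loading $\Rightarrow$ the $\min$ is asymptotically removable) and cannot substitute for this optimization step. A minor further point: in your step-2 exchange it is cleaner to reassign slots to $\mathcal{M}_5$ only, since $\mathcal{M}_6$ also drains $B_2$ and perturbs $\bar{R}_{r1}$, which your argument assumes unchanged.
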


\begin{proof}
Please refer to \cite[Appendix A]{EUSIPCO}.
\end{proof}
Using this theorem, in the following, we   derive the optimal transmission mode selection and power allocation policy.

\subsection{Optimal Protocol}

The available degrees of freedom in the considered network in each time slot are the mode selection variables,
the transmit powers of the nodes, and the time sharing variable for multiple access.
Herein, we formulate an optimization problem which gives the optimal values of $q_k(i)$, $P_j(i)$, and $t(i)$,
for $k=1,...,6$, $j=1,2,r$, and $\forall i$, such that the average sum rate of the users is maximized.
The optimization problem is  as follows
\begin{IEEEeqnarray}{Cll}\label{AdaptProb}
    {\underset{q_k(i),P_j(i),t(i),\,\, \forall i,j,k}{\mathrm{maximize}}}\,\, &\bar{R}_{1r}+\bar{R}_{2r} \nonumber \\
    \mathrm{subjected\,\, to} \,\, &\mathrm{C1}:\quad \bar{R}_{1r}=\bar{R}_{r2}  \nonumber \\
    &\mathrm{C2}:\quad \bar{R}_{2r}=\bar{R}_{r1} \nonumber \\
    &\mathrm{C3}:\quad \bar{P}_1+\bar{P}_2+\bar{P}_r \leq P_t \nonumber\\
		&\mathrm{C4}:\quad \sum\nolimits_{k = 1}^6 {q_k}\left( i \right) = 1, \,\, \forall i   \nonumber \\
    &\mathrm{C5}:\quad q_k(i) [1-q_k(i)] = 0, \,\, \forall i, k \nonumber \\
    &\mathrm{C6}:\quad P_j(i)\geq 0, \,\, \forall i, j \nonumber\\
    &\mathrm{C7}:\quad 0\leq t(i)\leq 1, \,\, \forall i \IEEEyesnumber
\end{IEEEeqnarray}
where $P_t$ is the total average power constraint of the nodes and $\bar{P}_1,\bar{P}_2$, and $\bar{P}_r$ denote the average powers
consumed by user 1, user 2, and the relay, respectively, and are given by
\begin{IEEEeqnarray}{lll}
    \bar{P}_1=\frac{1}{N} \mathop \sum \limits_{i = 1}^N (q_1(i)+q_3(i))P_1 (i) \IEEEyesnumber\IEEEyessubnumber \\
    \bar{P}_2=\frac{1}{N} \mathop \sum \limits_{i = 1}^N (q_2(i)+q_3(i))P_2 (i) \IEEEyessubnumber \\
    \bar{P}_r=\frac{1}{N} \mathop \sum \limits_{i = 1}^N (q_4(i)+q_5(i)+q_6(i))P_r (i). \IEEEyessubnumber
    \end{IEEEeqnarray}
In the optimization problem given in (\ref{AdaptProb}), constraints $\mathrm{C1}$ and $\mathrm{C2}$ are the conditions for sum rate maximization introduced in Theorem \ref{Queue}. Constraints $\mathrm{C3}$ and $\mathrm{C6}$ are the average total transmit power constraint and the power non-negativity constraint, respectively. Moreover, constraints $\mathrm{C4}$ and $\mathrm{C5}$ guarantee that only one of the transmission modes is selected in each time slot, and constraint $\mathrm{C7}$ specifies the acceptable interval for the time sharing variable $t(i)$. Furthermore, we maximize  $\bar{R}_{1r}+\bar{R}_{2r}$ since, according to Theorem 1 (and constraints $\mathrm{C1}$ and $\mathrm{C2}$), $\bar{R}_{1r}=\bar{R}_{r2}$ and  $\bar{R}_{2r}=\bar{R}_{r1}$ hold.

In the following Theorem, we introduce a protocol which
achieves the maximum sum rate.

\begin{theorem}[Mode Selection and Power Allocation Policy]\label{AdaptProt}
 Assuming $N\to \infty$, the optimal mode selection and power allocation policy which maximizes the sum rate of the considered three-node half-duplex bidirectional relay network with AWGN and block fading is given by
\begin{IEEEeqnarray}{lll}\label{SelecCrit}
q_{k^*}(i)=
   \begin{cases}
     1, & \textrm{if }k^*= {\underset{k=1,2,3,6}{\arg \, \max}} \{\Lambda_k(i)\} \\
     0, &\mathrm{otherwise}
\end{cases}
 \IEEEyesnumber
\end{IEEEeqnarray}
where $\Lambda_k(i)$ is referred to as \textit{selection metric} and is given by
\begin{IEEEeqnarray}{lll}\label{SelecMet}
    \Lambda_1(i) = (1-\mu_1)C_{1r}(i)  - \gamma  P_1(i) \big |_{P_1(i)=P_1^{\mathcal{M}_1}(i)} \IEEEyesnumber\IEEEyessubnumber  \\
    \Lambda_2(i) =  (1-\mu_2)C_{2r}(i)  - \gamma  P_2(i)\big |_{P_2(i)=P_2^{\mathcal{M}_2}(i)} \IEEEyessubnumber  \\
    \Lambda_3(i) = (1-\mu_1)C_{12r}(i)+(1-\mu_2)C_{21r}(i)  \nonumber \\
 \qquad\,\,\,- \gamma ( P_1(i)+P_2(i))\Big |_{P_2(i)=P_2^{\mathcal{M}_3}(i)}^{P_1(i)=P_1^{\mathcal{M}_3}(i)}\quad\,\,\IEEEyessubnumber  \\
  \Lambda_6(i) = \mu_1 C_{r2}(i)+\mu_2 C_{r1}(i)  - \gamma  P_r(i)\big |_{P_r(i)=P_r^{\mathcal{M}_6}(i)}\IEEEyessubnumber
\end{IEEEeqnarray}
where $P_j^{\mathcal{M}_k}(i)$ denotes the optimal transmit power of node $j$ for  transmission mode $\mathcal{M}_k$ in the $i$-th time slot and is given by
\begin{IEEEeqnarray}{lll}\label{OptPower}
   P_1^{\mathcal{M}_1} (i) = \left[\frac{1-\mu_1}{\gamma \mathrm{ln}2}-\frac{1}{S_1(i)}\right]^+
\IEEEyesnumber \IEEEyessubnumber \\
P_2^{\mathcal{M}_2} (i) = \left[\frac{1-\mu_2}{\gamma \mathrm{ln}2}-\frac{1}{S_2(i)}\right]^+
\IEEEyessubnumber \\
P_1^{\mathcal{M}_3} (i) =
{\begin{cases}
\left[\frac{1-\mu_1}{\gamma \mathrm{ln}2}-\frac{\mu_1-\mu_2}{\gamma \mathrm{ln}2}\frac{1}{\frac{S_1(i)}{S_2(i)}-1}\right]^+, \mathrm{if} \,\, \Omega_1\geq \Omega_2 \\
\left[\frac{\mu_1-\mu_2}{\gamma \mathrm{ln}2}\frac{1}{\frac{S_1(i)}{S_2(i)}-1} - \frac{1}{S_1(i)}\right]^+, \mathrm{otherwise}
\end{cases}} \IEEEyessubnumber \\ 
P_2^{\mathcal{M}_3} (i) =
{\begin{cases}
\left[\frac{\mu_1-\mu_2}{\gamma \mathrm{ln}2}\frac{1}{1-\frac{S_2(i)}{S_1(i)}} - \frac{1}{S_2(i)}\right]^+, \mathrm{if} \,\, \Omega_1 \geq \Omega_2 \\
\left[\frac{1-\mu_2}{\gamma \mathrm{ln}2}-\frac{\mu_1-\mu_2}{\gamma \mathrm{ln}2}\frac{1}{1-\frac{S_2(i)}{S_1(i)}}\right]^+, \mathrm{otherwise}
\end{cases}}  \IEEEyessubnumber \\
P_r^{\mathcal{M}_6} (i) = \left [ \frac{-b+\sqrt{b^2-4ac}}{2a} \right]^+ \IEEEyessubnumber
\end{IEEEeqnarray}
where $[x]^+=\max\{x,0\}$, $a=\gamma \mathrm{ln}2\times S_1(i)S_2(i), b= \gamma \mathrm{ln}2 \times (S_1(i)+S_2(i)) -
(\mu_1+\mu_2)S_1(i)S_2(i)$, and $c=\gamma \mathrm{ln}2 - \mu_1 S_2(i) - \mu_2 S_1(i)$. The thresholds $\mu_1$ and
$\mu_2$ are chosen such that constraints $\mathrm{C1}$ and $\mathrm{C2}$ in (\ref{AdaptProb}) hold and threshold
$\gamma$ is chosen such that the total  average transmit power satisfies $\mathrm{C3}$ in (\ref{AdaptProb}).
The optimal value of $t(i)$ in $C_{12r}(i)$ and $C_{21r}(i)$ is given by
\begin{IEEEeqnarray}{lll}
   t^*(i) = {\begin{cases} 0, & \Omega_1 \geq \Omega_2 \\
    1, & \Omega_1 < \Omega_2
\end{cases}}
\end{IEEEeqnarray}
\end{theorem}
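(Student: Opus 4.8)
The plan is to solve (\ref{AdaptProb}) by Lagrangian dual decomposition, exploiting the fact that the coupling constraints $\mathrm{C1}$, $\mathrm{C2}$, $\mathrm{C3}$ are the only ones that link different time slots, whereas $\mathrm{C4}$--$\mathrm{C7}$ act within a single slot. First I would relax the binary constraint $\mathrm{C5}$ to $q_k(i)\in[0,1]$ and rewrite the products $q_k(i)P_j(i)$ through the corresponding perspective functions, so that the objective and all rate terms become jointly concave in the (relaxed) selection and power variables; together with the linear/convex remaining constraints this makes the relaxed problem convex, and since a strictly feasible point exists (Slater), strong duality holds and the KKT conditions are necessary and sufficient. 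I would then dualize $\mathrm{C1}$, $\mathrm{C2}$, $\mathrm{C3}$ with multipliers $\mu_1,\mu_2,\gamma$ and keep $\mathrm{C4}$--$\mathrm{C7}$ explicit. Using Theorem~\ref{Queue} to replace $\min\{C_{r1}(i),Q_2(i-1)\}$ and $\min\{C_{r2}(i),Q_1(i-1)\}$ by $C_{r1}(i)$ and $C_{r2}(i)$, the Lagrangian separates into a sum of per-slot problems.

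Within each slot the Lagrangian is linear in the $q_k(i)$, and the coefficient multiplying $q_k(i)$ is precisely the selection metric $\Lambda_k(i)$ once the transmit powers of the active mode are optimized out: collecting terms gives $(1-\mu_1)C_{1r}-\gamma P_1$ for $\mathcal M_1$, $(1-\mu_2)C_{2r}-\gamma P_2$ for $\mathcal M_2$, $(1-\mu_1)C_{12r}+(1-\mu_2)C_{21r}-\gamma(P_1+P_2)$ for $\mathcal M_3$, $\mu_2 C_{r1}-\gamma P_r$ for $\mathcal M_4$, $\mu_1 C_{r2}-\gamma P_r$ for $\mathcal M_5$, and $\mu_1 C_{r2}+\mu_2 C_{r1}-\gamma P_r$ for $\mathcal M_6$. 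Because $\mathrm{C4}$ and $\mathrm{C5}$ force all weight onto a single index, the per-slot optimum puts the full unit weight on the mode with the largest metric, which is (\ref{SelecCrit}). I would then observe that $\Lambda_6=\Lambda_4+\mu_1 C_{r2}=\Lambda_5+\mu_2 C_{r1}$ as functions of $P_r$, so with $\mu_1,\mu_2\ge 0$ mode $\mathcal M_6$ dominates $\mathcal M_4$ and $\mathcal M_5$ pointwise in $P_r$ and hence after maximizing over $P_r$; this is why $\mathcal M_4$ and $\mathcal M_5$ drop out of the $\arg\max$.

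Next I would carry out the inner power maximizations. Each metric is concave in its powers, so stationarity (with the multiplier for $\mathrm{C6}$ producing the projection $[\,\cdot\,]^+$) yields the closed forms in (\ref{OptPower}): the single-variable modes $\mathcal M_1,\mathcal M_2$ give standard water-filling levels, while $\partial\Lambda_6/\partial P_r=0$ clears to the quadratic $aP_r^2+bP_r+c=0$ with the stated $a,b,c$, whose positive root gives $P_r^{\mathcal M_6}$. For $\mathcal M_3$ I would first use the MAC identity $C_{12r}(i)+C_{21r}(i)=C_r(i)$, which is independent of $t(i)$, to write the metric as $C_r-\mu_1 C_{12r}-\mu_2 C_{21r}-\gamma(P_1+P_2)$; the only $t$-dependent part is $-(\mu_1 C_{12r}+\mu_2 C_{21r})$, whose derivative in $t$ has the sign of $\mu_1-\mu_2$, forcing the bang-bang optimum $t^*\in\{0,1\}$. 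Fixing $t^*$ and solving the two stationarity equations in $P_1,P_2$ then produces the case-split expressions for $P_1^{\mathcal M_3},P_2^{\mathcal M_3}$.

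The main obstacle is twofold. First, I must justify that the binary relaxation is tight: for a continuous (atomless) fading distribution the set of channel realizations on which two selection metrics coincide has measure zero, so the relaxed optimum assigns $q_k(i)\in\{0,1\}$ almost surely, the relaxed and original problems share the same value, and the convex-relaxation KKT solution can be read back as a genuine per-slot mode-selection policy. Second, I must pin down the sign structure of the multipliers, namely $0\le\mu_2\le\mu_1\le 1$ when $\Omega_1\ge\Omega_2$ (and the reverse otherwise), since this is what guarantees positive water levels, the dominance of $\mathcal M_6$, and the identification $t^*=0\Leftrightarrow\mu_1\ge\mu_2\Leftrightarrow\Omega_1\ge\Omega_2$ used in the case split. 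I would fix $\mu_1,\mu_2,\gamma$ from the three active constraints $\mathrm{C1}$, $\mathrm{C2}$, $\mathrm{C3}$ (the inflow/outflow buffer balances and the power budget) and invoke KKT sufficiency for the convex problem to conclude that the resulting policy is globally optimal.
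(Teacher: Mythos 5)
Your overall route is the same as the paper's: relax $\mathrm{C5}$ to $0\leq q_k(i)\leq 1$, dualize $\mathrm{C1}$--$\mathrm{C3}$ with multipliers $\mu_1,\mu_2,\gamma$, note that the per-slot Lagrangian is linear in the $q_k(i)$ with coefficients equal to the selection metrics $\Lambda_k(i)$, carry out the inner power maximizations in closed form, and dispose of ties by a measure-zero argument (this is exactly the paper's Appendix~\ref{AppKKT} plus Appendix~\ref{AppBinRelax}). Two of your shortcuts are actually cleaner than the paper's treatment: the perspective-function substitution, which would give genuine convexity and KKT sufficiency (the paper never establishes convexity and instead argues that the KKT necessary conditions single out a unique candidate), and the observation that $\Lambda_3$ is affine in $t(i)$, which yields the bang-bang property directly rather than through the paper's KKT case analysis on $\phi_0(i),\phi_1(i)$. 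Note, however, that the perspective argument needs extra care in mode $\mathcal{M}_3$, where $t(i)$ multiplies both $q_3(i)$ and the powers, so joint concavity is not immediate as stated.

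The genuine gap is that you defer precisely the step to which the paper devotes a separate appendix. Everything you lean on in the second half --- ties among metrics having probability zero (hence tightness of the relaxation and sufficiency of the $\arg\max$ rule), strict dominance of $\mathcal{M}_6$ over $\mathcal{M}_4,\mathcal{M}_5$, positive water levels, and the identification $t^*=0\Leftrightarrow\mu_1\geq\mu_2\Leftrightarrow\Omega_1\geq\Omega_2$ used in the case split of (\ref{OptPower}c,d) --- hinges on the \emph{strict} interiority $0<\mu_1<1$, $0<\mu_2<1$. You state the needed structure as ``$0\leq\mu_2\leq\mu_1\leq1$'' and propose to ``fix the multipliers from the active constraints,'' but that is not a proof: for instance, if $\mu_1=0$ then $\Lambda_6(i)=\Lambda_4(i)$ identically, so ties occur with probability one and your measure-zero argument collapses; similarly $\mu_1=1$ kills $P_1^{\mathcal{M}_1}(i)$ for all $i$. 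The paper's Appendix~\ref{AppMURegion} rules out all such boundary values by showing each one forces a violation of $\mathrm{C1}$ or $\mathrm{C2}$ (no flow into or out of one of the buffers), and this is what makes Appendices~\ref{AppKKT} and \ref{AppBinRelax} close; your plan has no counterpart. Separately, there is a sign slip: since $C_{21r}=C_r-C_{12r}$, the $t$-dependent part of $\Lambda_3$ is $-(\mu_1-\mu_2)C_{12r}(i)$, and $C_{12r}(i)$ is increasing in $t(i)$, so the derivative has the sign of $\mu_2-\mu_1$, not $\mu_1-\mu_2$; taken literally, your claim would give $t^*=1$ when $\mu_1>\mu_2$, contradicting the (correct) identification you invoke later.
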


\begin{proof}
Please refer to Appendix \ref{AppKKT}.
\end{proof}

We note that the optimal solution utilizes neither modes $\mathcal{M}_4$ and $\mathcal{M}_5$ nor time sharing for any channel statistics and channel realizations.

\begin{remark1}
The mode selection metric $\Lambda_k(i)$ introduced in (\ref{SelecMet}) has two parts. The first part is the instantaneous capacity of mode $\mathcal{M}_k$, and the second part is the allocated power with negative sign. The capacity and the power terms are linked via thresholds $\mu_1$ and/or $\mu_2$  and $\gamma$. We note that thresholds $\mu_1$, $\mu_2$, and $\gamma$ depend only on the long term statistics of the channels. Hence, these thresholds can be obtained offline and used as long as the channel statistics remain unchanged. To find the optimal values for the thresholds $\mu_1$, $\mu_2$, and $\gamma$, we need a three-dimensional search, where  $\mu_1,\mu_2 \in (0\,\,1)$  and $\gamma>0$.
\end{remark1}

\begin{remark2}
Adaptive mode selection for bidirectional relay
networks under the assumption that the powers of the nodes are
fixed is considered in \cite{EUSIPCO}. Based on the average and instantaneous qualities of the links, all of the six possible transmission modes are selected in the protocol in \cite{EUSIPCO}. However, in the proposed protocol, modes
$\mathcal{M}_4$ and $\mathcal{M}_5$ are not selected at all. Moreover, the protocol
in \cite{EUSIPCO} utilizes a coin flip for implementation. Therefore, a
central node must decide which transmission mode is selected
in the next time slot. However, in the proposed protocol, all  nodes  can  find  the  optimal  mode  and
powers based on the full CSI.
\end{remark2}

\section{Simulation Results}\label{SimRes}

In this section, we evaluate the average sum rate achievable with the proposed protocol in the considered bidirectional relay network in Rayleigh fading. Thus, channel gains $S_1(i)$ and $S_2(i)$ follow exponential distributions with means $\Omega_1$ and $\Omega_2$, respectively. All of the presented results were obtained for $\Omega_2=1$ and $N=10^4$   time slots.

In Fig.~\ref{Mixed}, we  illustrate the maximum achievable sum rate obtained with the proposed protocol as a function of the total  average transmit power $P_t$. In this figure, to have
a better resolution for the sum rate at low and high $P_t$, we
show the sum rate for both log scale and linear scale $y$-axes,
respectively. The lines without markers in Fig.~\ref{Mixed} represent the achieved sum rates with the proposed protocol for $\Omega_1=1,2,5$. We observe that as the quality of the user 1-to-relay link increases (i.e., $\Omega_1$ increases), the sum rate increases too. However, for large $\Omega_1$, the bottleneck link is the relay-to-user 2 link, and since it is fixed, the sum rate saturates.

\begin{figure}
\centering
\psfrag{Y}[c][t][0.75]{$\text{Sum Rate}$}
\psfrag{X}[c][b][0.75]{$\text{Total Power Budget (dB)}$}
\psfrag{L1}[l][c][0.45]{$\text{Proposed Protocol}\,(\Omega_1=5)$}
\psfrag{L2}[l][c][0.45]{$\text{Proposed Protocol}\,(\Omega_1=2)$}
\psfrag{L3}[l][c][0.45]{$\text{Proposed Protocol}\,(\Omega_1=1)$}
\psfrag{L4}[l][c][0.45]{$\text{Selection Policy in [11]}\,(\Omega_1=1)$}
\psfrag{L5}[l][c][0.45]{$\text{Selection Policy in [8]}\,(\Omega_1=1)$}
\psfrag{L6}[l][c][0.45]{$\text{TDBC with PA}\,(\Omega_1=1)$}
\psfrag{L7}[l][c][0.45]{$\text{TDBC without PA}\,(\Omega_1=1)$}
\psfrag{00}[c][c][0.6]{$-20$}
\psfrag{01}[c][c][0.6]{$-15$}
\psfrag{02}[c][c][0.6]{$-10$}
\psfrag{03}[c][c][0.6]{$-5$}
\psfrag{04}[c][c][0.6]{$0$}
\psfrag{05}[c][c][0.6]{$5$}
\psfrag{06}[c][c][0.6]{$10$}
\psfrag{07}[c][c][0.6]{$15$}
\psfrag{08}[c][c][0.6]{$20$}
\psfrag{09}[c][c][0.6]{$10^{-3}\quad$}
\psfrag{10}[c][c][0.6]{$10^{-2}\quad$}
\psfrag{99}[c][c][0.6]{$10^{-1}\quad$}
\psfrag{12}[c][c][0.6]{$10^{0}\quad$}
\psfrag{13}[c][c][0.6]{$10^{1}\quad$}
\psfrag{14}[c][c][0.30]{$0$}
\psfrag{15}[c][c][0.30]{$0.5$}
\psfrag{16}[c][c][0.30]{$1$}
\psfrag{17}[c][c][0.30]{$1.5$}
\psfrag{18}[c][c][0.30]{$2$}
\psfrag{19}[c][c][0.30]{$2.5$}
\psfrag{20}[c][c][0.30]{$3$}
\psfrag{21}[c][c][0.30]{$3.5$}
\psfrag{22}[c][c][0.30]{$4$}
\psfrag{23}[c][c][0.30]{$4.5$}
\psfrag{24}[c][c][0.30]{$5$}
\psfrag{25}[c][c][0.30]{$-20$}
\psfrag{26}[c][c][0.30]{$-15$}
\psfrag{27}[c][c][0.30]{$-10$}
\psfrag{28}[c][c][0.30]{$-5$}
\psfrag{29}[c][c][0.30]{$0$}
\psfrag{30}[c][c][0.30]{$5$}
\psfrag{31}[c][c][0.30]{$10$}
\psfrag{32}[c][c][0.30]{$15$}
\psfrag{33}[c][c][0.30]{$20$}
\includegraphics[width=3.7 in]{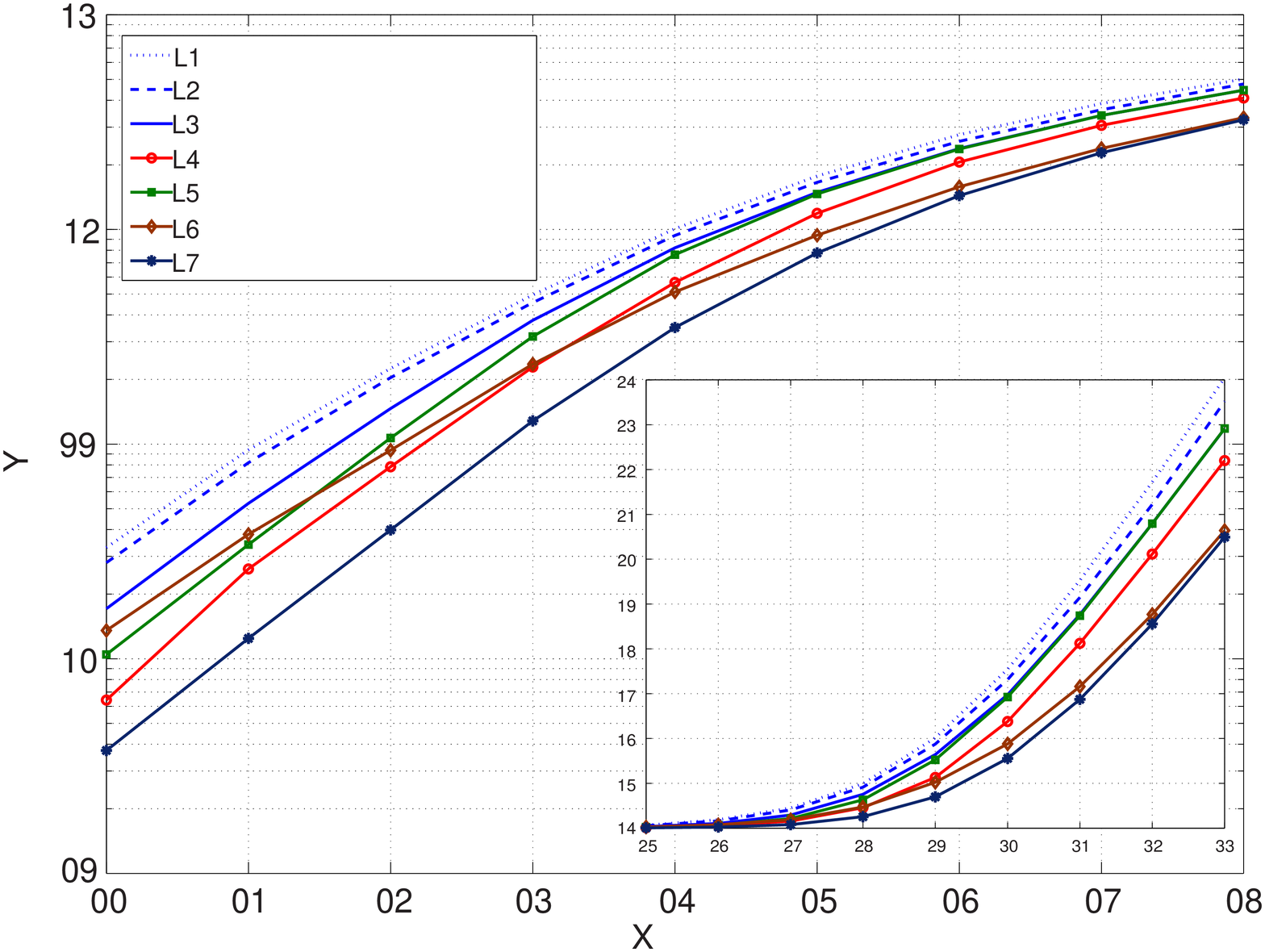}
\caption{Maximum sum rate versus $P_t$ for different protocols.}
\label{Mixed}
\end{figure}

As performance benchmarks, we consider in Fig.~\ref{Mixed} the sum rates of the TDBC protocol with and without power allocation \cite{TDBC} and the buffer-aided protocols presented in \cite{PopovskiLetter} and \cite{EUSIPCO},
respectively. For clarity, for the benchmark schemes, we only
show the sum rates for $\Omega_1=\Omega_2$. For the TDBC protocol
without power allocation and the protocol in \cite{PopovskiLetter}, all nodes
transmit with equal powers, i.e., $P_1 = P_2 = P_r = P_t$. For the
buffer-aided protocol in \cite{EUSIPCO}, we adopt  $P_1 = P_2 = P_r = P$
and $P$ is chosen such that the average total power consumed by all
nodes is $P_t$. We note that since $\Omega_1=\Omega_2$ and $P_1 = P_2$, the
protocol in \cite{EUSIPCO} only selects modes $\mathcal{M}_3$ and $\mathcal{M}_6$. Moreover, since $\Omega_1=\Omega_2$, we obtain $\mu_1=\mu_2$ in the proposed protocol.
Thus, considering the optimal power allocation in (\ref{OptPower}c) and
(\ref{OptPower}d), we obtain that either $P_1^{\mathcal{M}_3}(i)$ or $P_2^{\mathcal{M}_3}(i)$ is zero. Therefore, for the chosen parameters,
only modes $\mathcal{M}_1,\mathcal{M}_2$, and $\mathcal{M}_6$ are selected, i.e., the same modes as used in \cite{PopovskiLetter}\footnote{We note that although the protocol in \cite{PopovskiLetter} outperforms the protocol in \cite{EUSIPCO} if the sum rate is plotted as a function of total transmit power as is done in Fig.~\ref{Mixed}, the protocol in \cite{EUSIPCO} is optimal for given fixed node transmit powers.}. Hence, we can see how much gain we obtain due to the adaptive power allocation
by comparing our result with the results for the protocol in
\cite{PopovskiLetter}. On the other hand, the gain due to the adaptive mode
selection can be evaluated by comparing the sum rate of
the proposed protocol with the result for the TDBC protocol
with power allocation. From the comparison in Fig. \ref{Mixed}, we
observe that for high $P_t$, a considerable gain is obtained by
the protocols with adaptive mode selection (ours and that in
\cite{PopovskiLetter}) compared to the TDBC protocol which does not apply
adaptive mode selection (around $6$ dB gain). However, for high
$P_t$, power allocation is less beneficial, and therefore, the
sum rates obtained with the proposed protocol and that
in \cite{PopovskiLetter} converge. On the other hand, for low $P_t$, optimal
power allocation is crucial and, therefore, a considerable gain
is achieved by the protocols with adaptive power allocation
(ours and TDBC with power allocation).


\section{Conclusion}\label{Conclusion}
We have derived the maximum sum rate of the three-node
half-duplex bidirectional buffer-aided relay network with
fading links. The protocol which achieves the maximum sum
rate jointly optimizes the selection of the transmission mode
and the transmit powers of the nodes. The proposed optimal
mode selection and power allocation protocol requires the
instantaneous CSI of the involved links in each time slot and
their long-term statistics. Simulation results confirmed that the
proposed selection policy outperforms existing protocols in
terms of average sum rate.

\appendices

\section{Proof of Theorem 2 (Mode Selection Protocol)}
\label{AppKKT}

In this appendix, we solve the optimization problem given in (\ref{AdaptProb}). We first relax the binary condition for $q_k(i)$, i.e., $q_k(i)[1-q_k(i)]=0$, to $0\leq q_k(i)\leq 1$, and later in Appendix \ref{AppBinRelax}, we prove  that the binary relaxation does not affect the maximum average sum rate. In the following, we investigate the Karush-Kuhn-Tucker (KKT) necessary conditions \cite{Boyd} for the relaxed optimization problem and  show that the necessary conditions result in a unique sum rate and thus the solution is optimal.

To simplify the usage of the KKT conditions, we formulate a minimization problem equivalent to the relaxed maximization
problem in (\ref{AdaptProb}) as follows
\begin{IEEEeqnarray}{Cll}\label{AdaptProbMin}
    {\underset{q_k(i),P_j(i),t(i),\,\,\forall i,k,j}{\mathrm{minimize}}}\,\, &-(\bar{R}_{1r}+\bar{R}_{2r}) \nonumber \\
    \mathrm{subject\,\, to} \,\, &\mathrm{C1}:\quad \bar{R}_{1r}-\bar{R}_{r2}=0  \nonumber \\
    &\mathrm{C2}:\quad \bar{R}_{2r}-\bar{R}_{r1}=0 \nonumber \\
&\mathrm{C3}:\quad \bar{P}_1+\bar{P}_2+\bar{P}_r - P_t\leq 0  \nonumber\\
		&\mathrm{C4}:\quad \mathop \sum \nolimits_{k = 1}^6 {q_k}\left( i \right) - 1 =0, \,\, \forall i   \nonumber \\
    &\mathrm{C5}:\quad q_k(i)-1 \leq 0, \,\, \forall i, k \nonumber \\
&\mathrm{C6}:\quad -q_k(i) \leq 0, \,\, \forall i, k \nonumber \\
&\mathrm{C7}:\quad -P_j(i) \leq 0, \,\, \forall i,k \nonumber \\
    &\mathrm{C8}:\quad t(i)-1 \leq 0, \,\, \forall i \nonumber \\
&\mathrm{C9}:\quad -t(i) \leq 0, \,\, \forall i. \IEEEyesnumber
\end{IEEEeqnarray}
The Lagrangian function for the above optimization problem is provided in (\ref{KKT Function}) at the top of the next page
where $\mu_1,\mu_2,\gamma,\lambda(i),\alpha_k(i),\beta_k(i),
\nu_j(i),\phi_1(i)$, and $\phi_2(i)$ are the Lagrange multipliers corresponding to constraints $\mathrm{C1,C2,C3,C4,C5,C6,C7,C8}$, and $\mathrm{C9}$, respectively. The KKT conditions include the following:

\begin{figure*}[!t]
\normalsize
\begin{IEEEeqnarray}{l}\label{KKT Function}
   \underset{\mathrm{for}\,\, \forall i,k,j,l}{\mathcal{L}(q_k(i),P_j(i),t(i),\mu_l,\gamma,\lambda(i),\alpha_k(i),\beta_k(i),\phi_l(i))}  = \nonumber \\ 
\qquad \,\, - (\bar{R}_{1r}+\bar{R}_{2r}) + \mu_1(\bar{R}_{1r}-\bar{R}_{r2}) + \mu_2(\bar{R}_{2r}-\bar{R}_{r1}) + \gamma \left(\bar{P}_1+\bar{P}_2+\bar{P}_r - P_t\right) \nonumber \\
   \qquad \,\, + \mathop \sum \limits_{i = 1}^N \lambda \left( i \right)\left( {\mathop \sum \limits_{k = 1}^6 {q_k}\left( i \right) - 1} \right)
+ \mathop \sum \limits_{i = 1}^N \mathop \sum \limits_{k = 1}^6 {\alpha _k}\left( i \right)\left( {{q_k}\left( i \right) - 1} \right) - \mathop \sum \limits_{i = 1}^N \mathop \sum \limits_{k = 1}^6 {\beta _k}\left( i \right){q_k}\left( i \right) \nonumber \\ \qquad \,\,
-\mathop \sum \limits_{i = 1}^N \left[ \nu_1(i)P_1(i) + \nu_2(i)P_2(i) + \nu_3(i)P_3(i) \right] + \mathop \sum \limits_{i = 1}^N \phi_1(i) (t(i)-1) - \mathop \sum \limits_{i = 1}^N \phi_0(i) t(i) \IEEEeqnarraynumspace \IEEEyesnumber
\end{IEEEeqnarray}
\hrulefill

\vspace*{4pt}
\end{figure*}

\noindent
\textbf{1)} Stationary condition: The differentiation of the Lagrangian function with respect to the
primal variables, $q_k(i),P_j(i)$, and $t(i),\,\,\forall i,j,k$, is zero for the optimal solution, i.e.,
\begin{IEEEeqnarray}{CCCl}\label{Stationary Condition}
    \frac{\partial\mathcal{L}}{\partial q_k(i)} &=& 0, \quad &\forall i,k \IEEEyesnumber\IEEEyessubnumber\\
\frac{\partial\mathcal{L}}{\partial P_j(i)} &=& 0, \quad &\forall i,j \IEEEyessubnumber \\
\frac{\partial\mathcal{L}}{\partial t(i)} &=& 0, \quad &\forall i.\IEEEyessubnumber
\end{IEEEeqnarray}

\noindent
\textbf{2)} Primal feasibility condition: The optimal solution has to satisfy the constraints of the primal problem in (\ref{AdaptProbMin}).

\noindent
\textbf{3)} Dual feasibility condition: The Lagrange multipliers for the inequality constraints have to be non-negative, i.e.,
\begin{IEEEeqnarray}{lll}\label{Dual Feasibility Condition}
            \alpha_k(i)\geq 0, \quad &\forall i,k \IEEEyesnumber\IEEEyessubnumber\\
         \beta_k(i)\geq 0,\quad &\forall i,k \IEEEyessubnumber\\
          \gamma \geq 0,   &  \IEEEyessubnumber \\
           \nu_j(i) \geq0,   & \forall i,j \IEEEyessubnumber \\
             \phi_l(i) \geq0,   & \forall i,l. \IEEEyessubnumber
\end{IEEEeqnarray}

\noindent
\textbf{4)} Complementary slackness: If an inequality is inactive, i.e., the optimal solution is in the interior of the corresponding set, the corresponding Lagrange multiplier is zero. Thus, we obtain
\begin{IEEEeqnarray}{lll}\label{Complementary Slackness}
    {\alpha _k}\left( i \right)\left( {{q_k}\left( i \right) - 1} \right)=0,\quad  &\forall i,k \IEEEyesnumber\IEEEyessubnumber\\
    {\beta _k}\left( i \right){q_k}\left( i \right)=0,  &\forall i,k \IEEEyessubnumber\\
   \gamma (\bar{P}_1+\bar{P}_2+\bar{P}_r - P_t) = 0 \,\, & \IEEEyessubnumber\\
    \nu_j(i) P_j(i) = 0, &\forall i,j \IEEEyessubnumber\\
    \phi_1(i) (t(i)-1) = 0, &\forall i \IEEEyessubnumber\\
		\phi_0(i) t(i) = 0, &\forall i.\IEEEyessubnumber
\end{IEEEeqnarray}
A common approach to find a set of primal variables, i.e., $q_k(i), P_j(i),t(i),\,\,\forall i,j,k$ and Lagrange multipliers, i.e., $\mu_1,\mu_2,\gamma,\lambda(i),\alpha_k(i),\beta_k(i),\nu_j(i),\phi_l(i),\,\,
\forall i,k,l$, which satisfy the KKT conditions is to start with the complementary slackness conditions and see if the inequalities are active or not. Combining these results with the primal feasibility and dual feasibility conditions, we obtain various possibilities. Then, from these possibilities, we obtain one or more candidate solutions from the stationary conditions and the optimal solution is surely one of these candidates. In the following subsections, with this approach, we find the optimal values of $q_k^*(i),P_j^*(i),$ and $t^*(i),\,\,\forall i,j,k$.

\subsection{Optimal $q_k^*(i)$}

In order to determine the optimal selection policy, $q_k^*(i)$, we must calculate the derivatives
in (\ref{Stationary Condition}a). This leads to
\begin{IEEEeqnarray}{lll}\label{Stationary Mode}
    \frac{\partial\mathcal{L}}{\partial q_1(i)} = - \frac{1}{N}(1\Minus\mu_1)C_{1r}(i)\Add \lambda(i)\Add \alpha_1(i)\Minus \beta_1(i) \nonumber  + \frac{1}{N}\gamma P_1(i)\Equal 0\, \IEEEeqnarraynumspace \IEEEyesnumber \IEEEyessubnumber \\
    \frac{\partial\mathcal{L}}{\partial q_2(i)} = -\frac{1}{N}(1\Minus\mu_2)C_{2r}(i)\Add\lambda(i)\Add\alpha_2(i)\Minus\beta_2(i) \nonumber  +\frac{1}{N}\gamma P_2(i)\Equal 0 \IEEEyessubnumber   \\
    \frac{\partial\mathcal{L}}{\partial q_3(i)} = -\frac{1}{N}[(1\Minus\mu_1)C_{12r}(i)\Add(1\Minus\mu_2)C_{21r}(i)]\Add\lambda(i) +\alpha_3(i)\Minus\beta_3(i)\Add\frac{1}{N}\gamma (P_1(i)\Add P_2(i))\Equal 0 \,\,  \IEEEyessubnumber \\
    \frac{\partial\mathcal{L}}{\partial q_4(i)} = - \frac{1}{N}\mu_2 C_{r1}(i)\Add\lambda(i)\Add\alpha_4(i)\Minus \beta_4(i)\Add\frac{1}{N}\gamma P_r(i)\Equal 0 \qquad \, \IEEEyessubnumber \\
    \frac{\partial\mathcal{L}}{\partial q_5(i)} = -\frac{1}{N}\mu_1 C_{r2}(i)\Add\lambda(i)\Add\alpha_5(i)\Minus\beta_5(i)\Add \frac{1}{N}\gamma P_r(i)\Equal 0 \qquad \, \IEEEyessubnumber \\
    \frac{\partial\mathcal{L}}{\partial q_6(i)} = -\frac{1}{N}[\mu_1 C_{r2}(i)\Add\mu_2 C_{r1}(i)]\Add\lambda(i)\nonumber +\alpha_6(i)\Minus\beta_6(i)\Add\frac{1}{N}\gamma P_r(i)\Equal 0. \IEEEyessubnumber
\end{IEEEeqnarray}
Without loss of generality, we first obtain the necessary condition for $q_1^*(i)=1$ and then generalize the result
to $q_k^*(i)=1,\,\,k=2,\dots,6$. If $q_k^*(i)=1$, from constraint $\mathrm{C4}$ in (\ref{AdaptProbMin}), the other
selection variables are zero, i.e., $q_k^*(i)=0,\,\,k=2,...,6$. Furthermore, from (\ref{Complementary Slackness}),
we obtain $\alpha_k(i)=0,\,\,k = 2,...,6$ and $ \beta_1(i)=0$. By substituting these values into (\ref{Stationary Mode}), we obtain
\begin{IEEEeqnarray}{lll}\label{MET}
    \lambda(i)+\alpha_1(i) = (1-\mu_1)C_{1r}(i) -\gamma P_1(i) \triangleq \Lambda_1(i) \IEEEyesnumber\IEEEyessubnumber  \\
    \lambda(i)-\beta_2(i) =  (1-\mu_2)C_{2r}(i) -\gamma P_2(i) \triangleq \Lambda_2(i)\IEEEyessubnumber  \\
   \lambda(i)-\beta_3(i) =  (1\Minus \mu_1)C_{12r}(i)\Add (1\Minus \mu_2)C_{21r}(i) -\gamma (P_1(i)\Add P_2(i)) \triangleq  \Lambda_3(i) \quad\,\,\,\, \IEEEyessubnumber \\
   \lambda(i)-\beta_4(i)  = \mu_2 C_{r1}(i) -\gamma P_r(i) \triangleq \Lambda_4(i) \IEEEyessubnumber\\
    \lambda(i)-\beta_5(i) = \mu_1 C_{r2}(i) -\gamma P_r(i) \triangleq \Lambda_5(i) \IEEEyessubnumber\\
    \lambda(i)-\beta_6(i) = \mu_1 C_{r2}(i)+\mu_2 C_{r1}(i) -\gamma P_r(i) \triangleq \Lambda_6(i),\qquad\IEEEyessubnumber
\end{IEEEeqnarray}
where $\Lambda_k(i)$ is referred to as selection metric. By subtracting (\ref{MET}a) from the rest of the equations in (\ref{MET}), we obtain
\begin{IEEEeqnarray}{rCl}\label{eq_2_1}
    \Lambda_1(i) - \Lambda_k(i) = \alpha_1(i)+\beta_k(i), \quad k=2,3,4,5,6. \IEEEyesnumber
\end{IEEEeqnarray}
From the dual feasibility conditions given in (\ref{Dual Feasibility Condition}a) and (\ref{Dual Feasibility Condition}b), we have $\alpha_k(i),\beta_k(i)\geq 0$. By inserting $\alpha_k(i),\beta_k(i)\geq 0$ in (\ref{eq_2_1}), we obtain the necessary condition for $q_1^*(i)=1$ as
\begin{IEEEeqnarray}{lll}
    \Lambda_1(i) \geq \max \left \{ \Lambda_2(i), \Lambda_3(i), \Lambda_4(i), \Lambda_5(i), \Lambda_6(i) \right \}. \IEEEyesnumber
\end{IEEEeqnarray}
Repeating the same procedure for $q_k^*(i)=1,\,\,k=2,\dots,6$, we obtain a necessary condition for selecting transmission mode $\mathcal{M}_{k^*}$ in the $i$-th time slot as follows
\begin{IEEEeqnarray}{lll}\label{OptMet}
   \Lambda_{k^*}(i) \geq {\underset{k\in\{1,\cdots,6\}}{\max}}\{\Lambda_{k}(i)\}, \IEEEyesnumber
\end{IEEEeqnarray}
where the Lagrange multipliers $\mu_1,\mu_2$, and $\gamma$ are chosen such that $\mathrm{C1,C2}$, and $\mathrm{C3}$ in (\ref{AdaptProbMin}) hold and the optimal value of $t(i)$ in $C_{12r}(i)$ and $C_{21r}(i)$ is obtained in the next subsection.
We note that if
the selection metrics are not equal in the $i$-th time slot, only one of
the modes satisfies (\ref{OptMet}). Therefore, the necessary conditions for  the  mode  selection  in  (\ref{OptMet})  is  sufficient.  Moreover,  in
Appendix \ref{AppBinRelax}, we prove that the probability that two selection
metrics are equal is zero due to the randomness of the time-
continuous channel gains. Therefore, the necessary condition
for selecting transmission mode $\mathcal{M}_k$ in (\ref{OptMet}) is in fact sufficient
and is the optimal selection policy.

\subsection{Optimal $P_j^*(i)$}

In order to determine the optimal $P_j(i)$, we have to calculate the derivatives in (\ref{Stationary Condition}b). This leads to
\begin{subequations}\label{Stationary Power}
\begin{align*}
    \frac{\partial\mathcal{L}}{\partial P_1(i)} =        &-\frac{1}{N\mathrm{ln}2} \Big[ \big \{(1-\mu_1)q_1(i)-t(i)(\mu_1-\mu_2)q_3(i) \big \}\nonumber  \\
&\times \frac{S_1(i)}{1\Add P_1(i)S_1(i)}\Add \big\{t(i)(\mu_1\Minus \mu_2)\Add 1\Minus \mu_1 \big\}q_3(i) \nonumber \\
&\times \frac{S_1(i)}{1+P_1(i)S_1(i)+P_2(i)S_2(i)} \Big ] \nonumber \\
&+ \gamma \frac{1}{N} (q_1(i)+q_3(i))  - \nu_1(i) =0 \quad \tag{\stepcounter{equation}\theequation}\\
    \frac{\partial\mathcal{L}}{\partial P_2(i)} \Equal        &-\frac{1}{N\mathrm{ln}2} \Big[ \big \{ (1\Minus\mu_2)q_2(i) \Add(1\Minus t(i))(\mu_1\Minus \mu_2)q_3(i) \big\} \nonumber \\
&\times\frac{S_2(i)}{1\Add P_2(i)S_2(i)} \Add \big\{t(i)(\mu_1\Minus\mu_2)\Add 1\Minus \mu_1 \big\}q_3(i) \nonumber \\
&\times\frac{S_2(i)}{1+P_1(i)S_1(i)+P_2(i)S_2(i)} \Big]\nonumber \\
&+\gamma \frac{1}{N} (q_2(i)+q_3(i))  - \nu_2(i)=0 \quad\tag{\stepcounter{equation}\theequation}\\
    \frac{\partial\mathcal{L}}{\partial P_r(i)} =        &-\frac{1}{N\mathrm{ln}2} \Big[ \mu_2\left(q_4(i)+q_6(i)\right)\frac{S_1(i)}{1+P_r(i)S_1(i)}\nonumber \\ &+\mu_1\left(q_5(i)+q_6(i)\right)\frac{S_2(i)}{1+P_r(i)S_2(i)} \Big]\nonumber \\
&+\gamma \frac{1}{N} (q_4(i)+q_5(i)+q_6(i))  - \nu_r(i)=0 \tag{\stepcounter{equation}\theequation}
\end{align*}
\end{subequations}
The above conditions allow the derivation of the optimal powers for each transmission mode in each time slot.
For instance, in order to determine the transmit power of user 1 in  transmission mode $\mathcal{M}_1$, we
assume $q_1^*(i)=1$. From constraint $\mathrm{C4}$ in (\ref{AdaptProbMin}), we obtained that the other
selection variables are zero and therefore $q_3^*(i)=0$. Moreover, if $\mathcal{M}_1$ is selected
then $P_1^*(i)\neq 0$ and thus from (\ref{Complementary Slackness}d), we obtain $\nu_1^*(i)=0$.
Substituting these results in (\ref{Stationary Power}a), we obtain
\begin{IEEEeqnarray}{lll}	\label{eq_11}	
		  P_1^{\mathcal{M}_1} (i) = \left[\frac{1-\mu_1}{\gamma \mathrm{ln}2}-\frac{1}{S_1(i)}\right]^+,
\end{IEEEeqnarray}
where $[x]^+=\max\{0,x\}$. In a similar manner, we obtain the optimal powers for user 2 in mode $\mathcal{M}_2$, and the
optimal powers of the relay in modes $\mathcal{M}_4$ and $\mathcal{M}_5$ as follows:
\begin{IEEEeqnarray}{lll}\label{P245}		
		  P_2^{\mathcal{M}_2} (i) = \left[\frac{1-\mu_2}{\gamma \mathrm{ln}2}-\frac{1}{S_2(i)}\right]^+ \IEEEyesnumber\IEEEyessubnumber\\
P_r^{\mathcal{M}_4} (i) = \left[\frac{\mu_2}{\gamma \mathrm{ln}2}-\frac{1}{S_1(i)}\right]^+ \IEEEyessubnumber\\
P_r^{\mathcal{M}_5} (i) = \left[\frac{\mu_1}{\gamma \mathrm{ln}2}-\frac{1}{S_2(i)}\right]^+ \IEEEyessubnumber
\end{IEEEeqnarray}
In order to obtain the optimal powers of user 1 and user 2 in mode $\mathcal{M}_3$, we assume $q_3^*(i)=1$.
From $\mathrm{C4}$ in (\ref{AdaptProbMin}), we obtain that the other selection variables   are zero, and
therefore $q_1^*(i)=0$ and $q_2^*(i)=0$. We  note that if one of the powers of user 2 and user 1 is zero
mode $\mathcal{M}_3$ is identical to modes $\mathcal{M}_1$ and $\mathcal{M}_2$, respectively, and for that
case the optimal powers are already given by (\ref{eq_11}) and (\ref{P245}a), respectively. For the case
when $P_1^*(i)\neq 0$ and $P_2^*(i)\neq 0$, we obtain $\nu_1^*(i)=0$ and $\nu_2^*(i)=0$  from (\ref{Complementary Slackness}d).
Furthermore, for $q_3^*(i)=1$, we will show in Appendix \ref{AppKKT}.C that   $t(i)$ can only take the boundary values, i.e.,
zero or one, and cannot be in between. Hence, if we assume $t(i)=0$, from (\ref{Stationary Power}a)
and (\ref{Stationary Power}b), we obtain
\begin{IEEEeqnarray}{lll}\label{PowerM3}
&-\frac{1-\mu_1}{\mathrm{ln}2} \frac{S_1(i)}{1+P_1(i)S_1(i)+P_2(i)S_2(i)} + \gamma  =0 \quad \IEEEyesnumber\IEEEyessubnumber \\
&-\frac{1}{\mathrm{ln}2} \Big[ (\mu_1-\mu_2) \frac{S_2(i)}{1+P_2(i)S_2(i)} \nonumber \\
&+ (1-\mu_1) \frac{S_2(i)}{1+P_1(i)S_1(i)+P_2(i)S_2(i)} \Big]+ \gamma =0 \quad \IEEEyessubnumber
\end{IEEEeqnarray}
By substituting (\ref{PowerM3}a) in (\ref{PowerM3}b), we obtain $P_2^{\mathcal{M}_3} (i)$ and then we can derive $P_1^{\mathcal{M}_3} (i)$ from (\ref{PowerM3}a). This leads to
\begin{IEEEeqnarray}{lll}  \label{PM3-t0}
P_1^{\mathcal{M}_3} (i) =
\begin{cases}
P_1^{\mathcal{M}_1} (i),  \qquad\qquad\quad \mathrm{if}\,\, S_2 \leq \frac{S_1}{\frac{\mu_1-\mu_2}{\gamma \mathrm{ln}2}S_1+1} \\
\left[\frac{1-\mu_1}{\gamma \mathrm{ln}2}-\frac{\mu_1-\mu_2}{\gamma \mathrm{ln}2}\frac{1}{\frac{S_1(i)}{S_2(i)}-1}\right]^+,   \,\, \mathrm{otherwise}
\end{cases}\IEEEyesnumber\IEEEyessubnumber \\
P_2^{\mathcal{M}_3} (i) =
\begin{cases}
P_2^{\mathcal{M}_2} (i),  \qquad\qquad\quad \mathrm{if}\,\, S_2 \geq \frac{1-\mu_1}{1-\mu_2}S_1\\
\left[\frac{\mu_1-\mu_2}{\gamma \mathrm{ln}2}\frac{1}{1-\frac{S_2(i)}{S_1(i)}} - \frac{1}{S_2(i)}\right]^+,  \,\, \mathrm{otherwise}
\IEEEyessubnumber
\end{cases}\end{IEEEeqnarray}
Similarly, if we assume $t(i)=1$, we obtain
\begin{IEEEeqnarray}{lll} \label{PM3-t1}
P_1^{\mathcal{M}_3} (i) =
\begin{cases}
P_1^{\mathcal{M}_1} (i),  \qquad\qquad\quad \mathrm{if}\,\, S_2 \leq \frac{1-\mu_1}{1-\mu_2}S_1\\
\left[\frac{\mu_1-\mu_2}{\gamma \mathrm{ln}2}\frac{1}{\frac{S_1(i)}{S_2(i)}-1} - \frac{1}{S_1(i)}\right]^+,  \,\, \mathrm{otherwise}
\end{cases}\IEEEyesnumber\IEEEyessubnumber \\
P_2^{\mathcal{M}_3} (i) =
\begin{cases}
P_2^{\mathcal{M}_2} (i),  \qquad\qquad\quad \mathrm{if}\,\, S_2 \geq \frac{S_1}{\frac{\mu_1-\mu_2}{\gamma \mathrm{ln}2}S_1+1} \\
\left[\frac{1-\mu_2}{\gamma \mathrm{ln}2}-\frac{\mu_1-\mu_2}{\gamma \mathrm{ln}2}\frac{1}{1-\frac{S_2(i)}{S_1(i)}}\right]^+,\,\, \mathrm{otherwise}
\end{cases} \IEEEyessubnumber
\end{IEEEeqnarray}
We note that when $P_1^{\mathcal{M}_3} (i)=P_1^{\mathcal{M}_1} (i)$, we obtain $P_2^{\mathcal{M}_3} (i)=0$ which
means that mode $\mathcal{M}_3$ is identical to mode $\mathcal{M}_1$. Thus, there is no difference between both
modes so we select $\mathcal{M}_1$. In Figs \ref{FigSRegion} a) and \ref{FigSRegion} b),  the comparison of  $\Lambda_1(i),\Lambda_2(i)$,
and $\Lambda_3(i)$ is illustrated in the space of $(S_1(i),S_2(i))$. Moreover, the shaded area represents the region in
which the powers of users 1 and 2 are zero for $\mathcal{M}_1,\mathcal{M}_2$, and $\mathcal{M}_3$.
\begin{figure*}[!t]
\normalsize
\centering
\psfrag{S1}[c][c][0.6]{$S_1(i)$}
\psfrag{S2}[c][c][0.6]{$S_2(i)$}
\psfrag{xa}[c][c][0.6]{$\frac{\gamma \mathrm{ln}2}{1-\mu_1}$}
\psfrag{ya}[c][c][0.6]{$\frac{\gamma \mathrm{ln}2}{1-\mu_2}$}
\psfrag{xb}[c][c][0.6]{$\frac{\gamma \mathrm{ln}2}{1-\mu_1}$}
\psfrag{yb}[c][c][0.6]{$\frac{\gamma \mathrm{ln}2}{1-\mu_2}$}
\psfrag{xc}[c][c][0.6]{$\frac{\gamma \mathrm{ln}2}{\mu_2}$}
\psfrag{yc}[c][c][0.6]{$\frac{\gamma \mathrm{ln}2}{\mu_1}$}
\psfrag{Aa1}[c][c][0.75]{$\Lambda_2(i)=\Lambda_3(i)>\Lambda_1(i)$}
\psfrag{Aa21}[c][b][0.7]{$\Lambda_3(i)>\Lambda_1(i)$}
\psfrag{Aa22}[c][c][0.7]{$\Lambda_3(i)>\Lambda_2(i)$}
\psfrag{Aa3}[c][c][0.75]{$\Lambda_1(i)=\Lambda_3(i)>\Lambda_2(i)$}
\psfrag{Bb1}[c][c][0.75]{$\Lambda_2(i)=\Lambda_3(i)$}
\psfrag{Bb12}[c][c][0.75]{$\qquad>\Lambda_1(i)$}
\psfrag{Bb21}[c][c][0.65]{$\Lambda_3(i)>\Lambda_1(i)$}
\psfrag{Bb22}[c][t][0.65]{$\Lambda_3(i)>\Lambda_2(i)$}
\psfrag{Bb3}[c][c][0.75]{$\Lambda_1(i)=\Lambda_3(i)>\Lambda_2(i)$}
\psfrag{Cc21}[c][c][0.75]{$\Lambda_6(i)>\Lambda_4(i)$}
\psfrag{Cc22}[c][c][0.75]{$\Lambda_6(i)>\Lambda_5(i)$}
\psfrag{a}[c][c][0.75]{a) $\mu_1>\mu_2$}
\psfrag{b}[c][c][0.75]{b) $\mu_1<\mu_2$}
\psfrag{c}[c][c][0.75]{c)}
\includegraphics[width=6.5 in]{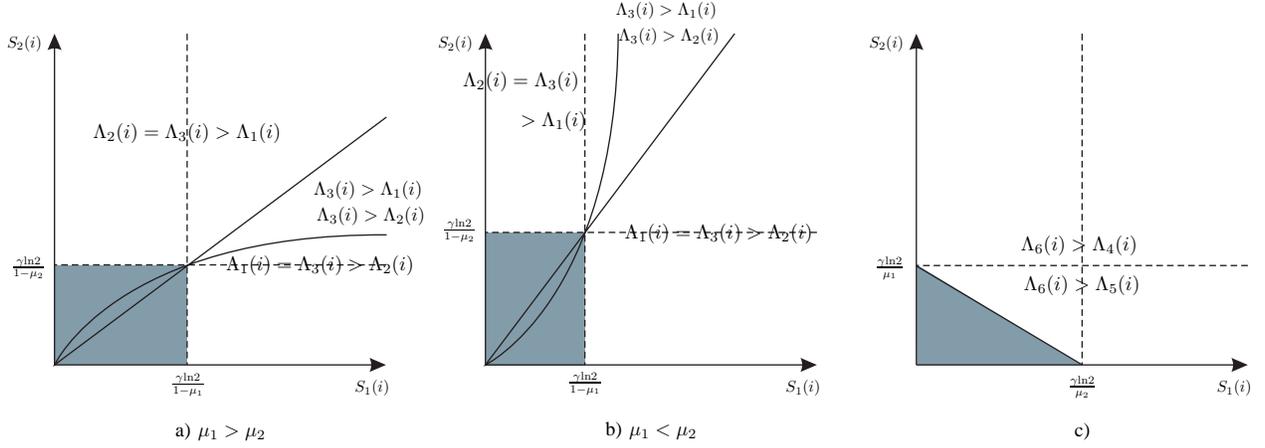}
\caption{Comparison of the selection metrics in the space of $(S_1,S_2)$: a) Comparison of $\Lambda_1(i),\Lambda_2(i)$, and $\Lambda_3(i)$ if $\mu_1>\mu_2$, b) comparison of $\Lambda_1(i),\Lambda_2(i)$, and $\Lambda_3(i)$ if $\mu_1<\mu_2$, c) comparison of  $\Lambda_4(i),\Lambda_5(i)$, and $\Lambda_6(i)$.}
\label{FigSRegion}
\vspace*{4pt}
\end{figure*}

For mode $\mathcal{M}_6$, we assume $q_6^*(i)=1$. From constraint $\mathrm{C4}$ in (\ref{AdaptProbMin}), we obtain
that the other selection variables are zero and therefore $q_4^*(i)=0$ and $q_5^*(i)=0$. Moreover,
if $q_6^*(i)=1$ then $P_r^*(i)\neq 0$ and thus from (\ref{Complementary Slackness}d), we obtain $\nu_r^*(i)=0$.
Using these results in (\ref{Stationary Power}c), we obtain
\begin{IEEEeqnarray}{lll}\label{PowerM6}
\mu_2\frac{S_1(i)}{1+P_r(i)S_1(i)} + \mu_1\frac{S_2(i)}{1+P_r(i)S_2(i)} = \gamma \mathrm{ln}2
\end{IEEEeqnarray}
The above equation is a quadratic equation and has two solutions for $P_r(i)$. However, since we have $P_r(i)\geq 0$, we can conclude that the left hand side of (\ref{PowerM6}) is monotonically decreasing in $P_r(i)$. Thus, if $\mu_2S_1(i)+\mu_1S_2(i)>\gamma \mathrm{ln}2$, we have a unique positive solution for $P_r(i)$ which is the maximum of the two roots of (\ref{PowerM6}). Thus, we obtain
\begin{IEEEeqnarray}{lll}\label{PM6}
P_r^{\mathcal{M}_6} (i) = \left [ \frac{-b+\sqrt{b^2-4ac}}{2a} \right]^+, 
\end{IEEEeqnarray}
where $a=\gamma \mathrm{ln}2 S_1(i)S_2(i), b= \gamma \mathrm{ln}2 (S_1(i)+S_2(i)) - (\mu_1+\mu_2)S_1(i)S_2(i)$, and $c=\gamma \mathrm{ln}2 - \mu_1 S_2(i) - \mu_2 S_1(i)$.

In  Fig. \ref{FigSRegion} c), the comparison between selection metrics $\Lambda_4(i),\Lambda_5(i),$ and $\Lambda_6(i)$ is
illustrated in the space of $(S_1(i),S_2(i))$. We note that
$\Lambda_6(i)\geq \Lambda_4(i)$ and $\Lambda_6(i)\geq \Lambda_5(i)$ hold and the inequalities hold with equality if $S_2(i)=0$ and $S_1(i)=0$, respectively, which happen with zero probability for time-continuous fading.
To prove $\Lambda_6(i)\geq \Lambda_4(i)$, from (\ref{MET}), we obtain
\begin{IEEEeqnarray}{rCl}
   \Lambda_6(i)  &=& \mu_1C_{r2}(i) + \mu_2C_{r1}(i) -\gamma P_r(i) \big |_{P_r(i)=P_r^{\mathcal{M}_6}(i)} \nonumber \\
 &\overset{(a)} {\geq}& \mu_1C_{r2}(i) + \mu_2C_{r1}(i) -\gamma P_r(i)\big |_{P_r(i)=P_r^{\mathcal{M}_4}(i)} \nonumber \\
 &\overset{(b)} {\geq}& \mu_2C_{r1}(i) -\gamma P_r(i) \big |_{P_r(i)=P_r^{\mathcal{M}_4}(i)} = \Lambda_4(i),
\end{IEEEeqnarray}
where   $(a)$ follows from the fact that $P_r^{\mathcal{M}_6}(i)$ maximizes $\Lambda_6(i)$ and   $(b)$ follows from $\mu_1C_{r2}(i)\geq 0$. The two inequalities $(a)$ and $(b)$ hold with equality only if $S_2(i)=0$ which happens with zero probability in time-continuous fading or if $\mu_1=0$. However,  in Appendix \ref{AppMURegion}, $\mu_1=0$  is shown to lead to a contradiction. Therefore, the optimal policy does not select $\mathcal{M}_4$ and $\mathcal{M}_5$ and selects only modes $\mathcal{M}_1,\mathcal{M}_2,\mathcal{M}_3$, and $\mathcal{M}_6$.

\subsection{Optimal $t^*(i)$}
To find the optimal $t(i)$, we assume $q^*_3(i)=1$ and calculate the stationary condition in (\ref{Stationary Condition}c). This leads to
\begin{IEEEeqnarray}{lll} \label{Stationary t}
    \frac{\partial\mathcal{L}}{\partial t(i)} =&-\frac{1}{N}(\mu_1-\mu_2) \left[ C_r(i) - C_{1r}(i) - C_{2r}(i) \right] \nonumber \\
&+\phi_1(i)-\phi_0(i)=0
\end{IEEEeqnarray}
Now, we investigate the following possible cases for $t^*(i)$:

\noindent
\textbf{Case 1:} If $0<t^*(i)<1$ then from (\ref{Complementary Slackness}e) and (\ref{Complementary Slackness}f), we have $\phi_l(i)=0, \,\, l=0,1$. Therefore, from (\ref{Stationary t}) and $C_r(i) - C_{1r}(i) - C_{2r}(i)\leq 0$, we obtain $\mu_1=\mu_2$. Then, from (\ref{Stationary Power}a) and (\ref{Stationary Power}b), we obtain
\begin{IEEEeqnarray}{rCl}\label{ContradicT}
    {\begin{cases}
     -\frac{1}{\mathrm{ln} 2} (1-\mu_1) \frac{S_1(i)}{1+P_1(i)S_1(i)+P_2(i)S_2(i)} +\gamma =0 \\
		 -\frac{1}{\mathrm{ln} 2} (1-\mu_1) \frac{S_2(i)}{1+P_1(i)S_1(i)+P_2(i)S_2(i)} +\gamma =0
    \end{cases}}
\end{IEEEeqnarray}
In Appendix \ref{AppMURegion}, we show that $\mu_1 \neq 1$, therefore, the above conditions can be satisfied simultaneously only if $S_1(i)=S_2(i)$, which, considering the randomness of the time-continuous channel gains, occurs with zero probability. Hence, the optimal $t(i)$ takes the boundary values, i.e., zero or one, and not values in between.

\noindent
\textbf{Case 2:} If $t^*(i)=0$, then from (\ref{Complementary Slackness}e), we obtain $\phi_1(i)=0$ and from (\ref{Dual Feasibility Condition}e), we obtain $\phi_0(i)\geq 0$. Combining these results into (\ref{Stationary t}), the necessary condition for  $t(i)=0$ is obtained as $\mu_1\geq \mu_2$.

\noindent
\textbf{Case 3:} If $t^*(i)=1$, then from (\ref{Complementary Slackness}f), we obtain $\phi_0(i)=0$ and from  (\ref{Dual Feasibility Condition}e), we obtain $\phi_1(i)\geq 0$. Combining these results into (\ref{Stationary t}), the necessary condition for $t(i)=1$ is obtained as $\mu_1 \leq  \mu_2$.

We note that if $\mu_1=\mu_2$, we obtain either
$P_1^{\mathcal{M}_3}(i)=0$ or $P_2^{\mathcal{M}_3}(i)=0,\,\,\forall i$. Therefore, mode
$\mathcal{M}_3$ is not selected and the value of $t(i)$ does not affect the sum rate.
Moreover, from the selection metrics in (\ref{MET}), we can conclude that $\mu_1>\mu_2$ and $\mu_1<\mu_2$
correspond to $\Omega_1>\Omega_2$ and $\Omega_1<\Omega_2$, respectively. Therefore, the optimal value of $t(i)$ is given by
\begin{IEEEeqnarray}{lll}
   t^*(i) = {\begin{cases} 0, & \Omega_1 \geq \Omega_2 \\
    1, & \Omega_1 < \Omega_2
\end{cases}}
\end{IEEEeqnarray}

Now, the optimal values of $q_k(i),P_j(i)$, and $t(i),\,\,\forall i,j,k$ are derived based on which Theorem \ref{AdaptProt} can be constructed. This completes the proof.


\section{Proof of Optimality of Binary Relaxation}
\label{AppBinRelax}

In this appendix, we prove that the optimal solution of the problem with the relaxed constraint,
$0\leq  q_k(i)\leq 1$, selects the boundary values of $q_k(i)$, i.e., zero or one.
Therefore, the binary relaxation does not change the solution of the problem. If one of the $q_k(i),\,\,k=1,\dots,6$,
adopts a non-binary value in the optimal solution, then in order to satisfy constraint $\mathrm{C4}$ in (\ref{AdaptProb}), there has to be at least one other non-binary selection variable in that time slot.
Assuming that the mode indices of the non-binary selection variables are $k'$ and $k''$ in the $i$-th time slot,
we obtain $\alpha_k(i)=0,\,\,k = 1,\dots,6$ from (\ref{Complementary Slackness}a), and $ \beta_{k'}(i)=0$ and
$ \beta_{k''}(i)=0$  from (\ref{Complementary Slackness}b). Then, by substituting these values into (\ref{Stationary Mode}),
we obtain
\begin{IEEEeqnarray}{lll}\label{BinRelax}
    \lambda(i) = \Lambda_{k'}(i)  \IEEEyesnumber\IEEEyessubnumber  \\
    \lambda(i)= \Lambda_{k''}(i)\IEEEyessubnumber  \\
   \lambda(i)-\beta_k(i) =  \Lambda_k(i), \quad k\neq k', k''. \IEEEyessubnumber
\end{IEEEeqnarray}
From (\ref{BinRelax}a) and  (\ref{BinRelax}b), we obtain $\Lambda_{k'}(i)=\Lambda_{k''}(i)$ and by subtracting (\ref{BinRelax}a) and  (\ref{BinRelax}b) from (\ref{BinRelax}c), we obtain
\begin{IEEEeqnarray}{rCl}
    \Lambda_{k'}(i) - \Lambda_k(i) &=& \beta_k(i), \quad \quad k\neq k', k'' \IEEEyesnumber \IEEEyessubnumber  \\
\Lambda_{k''}(i) - \Lambda_k(i) &=& \beta_k(i), \quad \quad k\neq k', k''. \IEEEyessubnumber
\end{IEEEeqnarray}
From the dual feasibility condition given in (\ref{Dual Feasibility Condition}b), we have $\beta_k(i)\geq 0$ which leads to $\Lambda_{k'}(i)=\Lambda_{k''}(i)\geq \Lambda_k(i)$.
However, as a result of the randomness of the time-continuous channel gains,  $\Pr\{\Lambda_{k'}(i)=\Lambda_{k''}(i)\} > 0$
holds for some transmission modes $\mathcal{M}_{k'}$ and $\mathcal{M}_{k''}$, if and only if we obtain
$\mu_1=0,1$ or $\mu_2=0,1$ which leads to a contradiction as shown in Appendix \ref{AppMURegion}. This completes the proof.


\section{Threshold Regions}
\label{AppMURegion}

In  this  appendix,  we  find  the  intervals  which  contain
the  optimal  value  of  $\mu_1$ and $\mu_2$.  We  note  that  for  different values  of  $\mu_1$ and $\mu_2$,  some  of  the  optimal  powers  derived in (\ref{eq_11}), (\ref{P245}), (\ref{PM3-t0}), (\ref{PM3-t1}), and (\ref{PM6})    are  zero  for  all  channel realizations.  For  example,  if   $\mu_1=1$,  we  obtain  $P^{\mathcal{M}_1}_1(i)=0,\,\,\forall i$
from (\ref{eq_11}). Fig. \ref{FigMURegion} illustrates the set of modes that can take positive powers with non-zero probability in the space of ($\mu_1,\mu_2$). In the following, we show that any values of $\mu_1$ and $\mu_2$ except $0<\mu_1<1$ and $0<\mu_2<1$
cannot lead to the optimal sum rate or violate constraints $\mathrm{C1}$ or $\mathrm{C2}$ in (\ref{AdaptProbMin}).

\noindent
\textbf{Case 1:} Sets  $\{\mathcal{M}_1,\mathcal{M}_2,\mathcal{M}_3\}$ and $\{\mathcal{M}_4,\mathcal{M}_5,\mathcal{M}_6\}$ lead to selection of either the transmission from the users to the relay or the transmission from the relay to the users, respectively,  for all time slots. This leads to violation of constraints $\mathrm{C1}$ and $\mathrm{C2}$ in (\ref{AdaptProbMin}) and thus the optimal values of $\mu_1$ and $\mu_2$ are not in this region.

\noindent
\textbf{Case 2:} In set  $\{\mathcal{M}_1,\mathcal{M}_4,\mathcal{M}_6\}$, both modes $\mathcal{M}_4$ and $\mathcal{M}_6$ need the transmission from user 2 to the relay which can not be realized in this set. Thus, this set leads to violation of constraint $\mathrm{C2}$ in (\ref{AdaptProbMin}). Similarly, in set $\{\mathcal{M}_2,\mathcal{M}_5,\mathcal{M}_6\}$,  both modes $\mathcal{M}_5$ and $\mathcal{M}_6$ require the transmission from user 1 to the relay which can not be selected in this set. Thus, this region of $\mu_1$ and $\mu_2$ leads to violation of constraint $\mathrm{C1}$ in (\ref{AdaptProbMin}).

\noindent
\textbf{Case 3:} In set $\{\mathcal{M}_1,\mathcal{M}_4,\mathcal{M}_5,\mathcal{M}_6\}$, there is no transmission from user 2 to
the relay. Therefore, the optimal values of $\mu_1$ and $\mu_2$ have to guarantee that modes $\mathcal{M}_4$
and $\mathcal{M}_6$ are not selected for any channel realization. However, from (\ref{MET}), we obtain
\begin{IEEEeqnarray}{rCl}
   \Lambda_6(i)  &=& \mu_1C_{r2}(i) + \mu_2C_{r1}(i) -\gamma P_r(i) \big |_{P_r(i)=P_r^{\mathcal{M}_6}(i)} \nonumber \\
 &\overset{(a)} {\geq}& \mu_1C_{r2}(i) + \mu_2C_{r1}(i) -\gamma P_r(i)\big |_{P_r(i)=P_r^{\mathcal{M}_5}(i)} \nonumber \\
 &\overset{(b)} {\geq}& \mu_1C_{r2}(i) -\gamma P_r(i) \big |_{P_r(i)=P_r^{\mathcal{M}_5}(i)} = \Lambda_5(i),
\end{IEEEeqnarray}
where $(a)$ follows from the fact that $P_r^{\mathcal{M}_6}(i)$ maximizes $\Lambda_6(i)$ and   $(b)$ follows
from $\mu_2C_{r1}(i)\geq 0$. The two inequalities $(a)$ and $(b)$ hold with equality only if $S_1(i)=0$ which happens with zero probability for time-continuous fading, or $\mu_2=0$ which is not included in this region. Therefore, mode $\mathcal{M}_6$ is selected in this region which leads to violation of constraint $\mathrm{C2}$ in (\ref{AdaptProbMin}). A similar statement is true for set $\{\mathcal{M}_2,\mathcal{M}_4,\mathcal{M}_5,\mathcal{M}_6\}$. Thus, the optimal values of $\mu_1$ and $\mu_2$ cannot be in these two regions.

\noindent
\textbf{Case 4:} In set $\{\mathcal{M}_1,\mathcal{M}_2,\mathcal{M}_3,\mathcal{M}_4,\mathcal{M}_6\}$,
we obtain
\begin{IEEEeqnarray}{rCl}
   \Lambda_6(i) &=& \mu_1C_{r2}(i) + \mu_2 C_{r1}(i)-\gamma P_r(i)\big |_{P_r(i)=P_r^{\mathcal{M}_6}(i)} \nonumber \\
 &\overset{(a)} {\leq}& \mu_2 C_{r1}(i)-\gamma P_r(i) \big |_{P_r(i)=P_r^{\mathcal{M}_6}(i)} \nonumber \\
 &\overset{(b)} {\leq}& \mu_2 C_{r1}(i)-\gamma P_r(i) \big |_{P_r(i)=P_r^{\mathcal{M}_4}(i)} = \Lambda_4(i)
\end{IEEEeqnarray}
where inequality $(a)$ comes from the fact that $\mu_1C_{r2}(i)\leq 0$ and the equality holds when $S_2(i)=0$ which happens
with zero probability, or $\mu_1=0$. Inequality $(b)$ holds since $P_r^{\mathcal{M}_4}(i)$ maximizes $\Lambda_4(i)$ and
holds with equality only if $P_r^{\mathcal{M}_4}(i)=P_r^{\mathcal{M}_6}(i)$ and consequently $\mu_1=0$. If $\mu_1\neq0$, mode $\mathcal{M}_6$ is not selected and there is no transmission from the relay to user 2. Therefore, the optimal values of $\mu_1$ and $\mu_2$ have to guarantee that modes $\mathcal{M}_1$ and $\mathcal{M}_3$ are not selected for any channel realization. Thus, we obtain $\mu_1=1$ which is not contained in this region. If $\mu_1=0$,
from (\ref{MET}), we obtain
\begin{IEEEeqnarray}{rCl}
   \Lambda_6(i) = \Lambda_4(i) &=& \mu_2C_{r1}(i) -\gamma P_r(i)\big |_{P_r(i)=P_r^{\mathcal{M}_4}(i)} \nonumber \\
 &\overset{(a)} {\leq}& C_{r1}(i) -\gamma P_r(i)\big |_{P_r(i)=P_r^{\mathcal{M}_4}(i)} \nonumber \\
 &\overset{(b)} {\leq}& C_{1r}(i)-\gamma P_1(i) \big |_{P_1(i)=P_1^{\mathcal{M}_1}(i)} = \Lambda_1(i) \quad\,\,
\end{IEEEeqnarray}
where both inequalities $(a)$ and $(b)$ hold with equality only if $\mu_2=1$. If $\mu_2\neq 1$, modes $\mathcal{M}_4$ and $\mathcal{M}_6$ are not selected. Thus, there is no transmission from the relay to the users which leads to violation of $\mathrm{C1}$ and $\mathrm{C2}$ in (\ref{AdaptProbMin}). If $\mu_2=1$, we obtain $P_2^{\mathcal{M}_2}(i)=0$, thus mode $\mathcal{M}_2$ cannot be selected and either $P_1^{\mathcal{M}_3}(i)=0$ or $P_2^{\mathcal{M}_3}(i)=0$, thus mode $\mathcal{M}_3$ cannot be selected either. Since both modes $\mathcal{M}_4$ and $\mathcal{M}_6$ require the transmission from user 2 to the relay, and both modes $\mathcal{M}_2$ and $\mathcal{M}_3$ are not selected, constraint $\mathrm{C2}$ in (\ref{AdaptProbMin}) is violated and $\mu_1=0$ and $\mu_2=1$ cannot be optimal. A similar statement is true for set $\{\mathcal{M}_1,\mathcal{M}_2,\mathcal{M}_3,\mathcal{M}_5,\mathcal{M}_6\}$. Therefore, the optimal values of $\mu_1$ and $\mu_2$ are not in this region.

Hence, set $\{\mathcal{M}_1,\mathcal{M}_2,\mathcal{M}_3,\mathcal{M}_4,\mathcal{M}_5,\mathcal{M}_6\}$ contains the optimal
values of $\mu_1$ and $\mu_2$, i.e., $0<\mu_1<1$ and $0<\mu_2<1$. This completes the proof.
\begin{figure}
\centering
\psfrag{M1}[c][c][0.75]{$\mu_1$}
\psfrag{M2}[c][c][0.75]{$\mu_2$}
\psfrag{Mu1}[c][c][0.75]{$\mu_1 = 1$}
\psfrag{M21}[c][c][0.75]{$\,\,\mu_2=1$}
\psfrag{S1}[c][c][0.75]{$\big\{\mathcal{M}_1,\mathcal{M}_4,\mathcal{M}_6\big\}$}
\psfrag{S2}[c][c][0.75]{\begin{tabular}{c}
   $\big\{\mathcal{M}_1,\mathcal{M}_2,\mathcal{M}_3,$\\
  $\quad\qquad\mathcal{M}_4,\mathcal{M}_6\big\}$
\end{tabular}}
\psfrag{S3}[c][c][0.75]{$\big\{\mathcal{M}_1,\mathcal{M}_2,\mathcal{M}_3\big\}$}
\psfrag{S4}[c][c][0.75]{\begin{tabular}{c}
  $\big\{\mathcal{M}_1,\mathcal{M}_2,$\\
   $\,\,\,\mathcal{M}_3,\mathcal{M}_5,$ \\
$\quad\qquad\mathcal{M}_6\big\}$
\end{tabular}}
\psfrag{S5}[c][c][0.75]{\begin{tabular}{c}
   $\big\{\mathcal{M}_1,\mathcal{M}_2,$\\
   $\,\,\,\mathcal{M}_3,\mathcal{M}_4,$ \\
$\,\,\,\mathcal{M}_5,\mathcal{M}_6\big\}$
\end{tabular}}
\psfrag{S6}[c][c][0.75]{\begin{tabular}{c}
   $\big\{\mathcal{M}_1,\mathcal{M}_4,$\\
$\,\,\,\mathcal{M}_5,\mathcal{M}_6\big\}$
\end{tabular}}
\psfrag{S7}[c][c][0.75]{$\big\{\mathcal{M}_4,\mathcal{M}_5,\mathcal{M}_6\big\}$}
\psfrag{S8}[c][c][0.75]{\begin{tabular}{c}
   $\big\{\mathcal{M}_2,\mathcal{M}_4,$\\
$\,\,\,\mathcal{M}_5,\mathcal{M}_6\big\}$
\end{tabular}}
\psfrag{S9}[c][c][0.75]{$\big\{\mathcal{M}_2,\mathcal{M}_5,\mathcal{M}_6\big\}$}
\includegraphics[width=3 in]{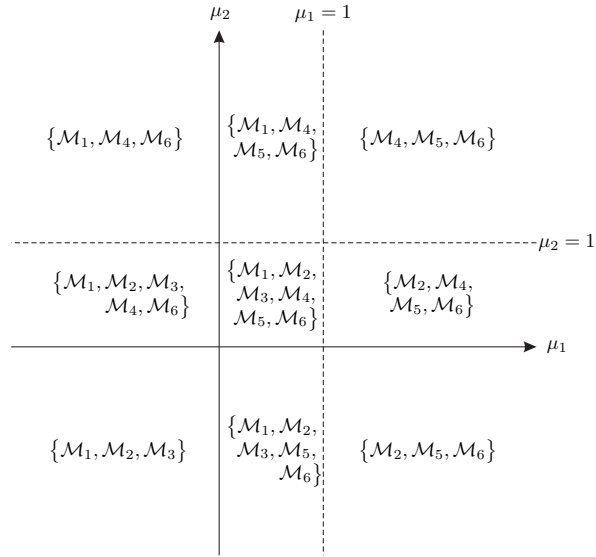}
\caption{Modes with non-negative powers in the space of $(\mu_1,\mu_2)$.}
\label{FigMURegion}
\end{figure}



\end{document}